\newtheorem{theorem}{{Theorem}}
\newtheorem{lemma}{{Lemma}}
\newtheorem{definition}{{Definition}}
\newtheorem{remark}{{Remark}}
\newtheorem{assumption}{Assumption}
\begin{document}

\title{Data Heterogeneity-Aware Client Selection for Federated Learning in Wireless Networks}

\author{Yanbing Yang,
	Huiling~Zhu, Wenchi~Cheng, Jingqing~Wang, Changrun~Chen, and Jiangzhou~Wang,~\IEEEmembership{Fellow,~IEEE}
}



\maketitle

\begin{abstract}
Federated Learning (FL) enables mobile edge devices, functioning as clients, to collaboratively train a decentralized model while ensuring local data privacy. However, the efficiency of FL in wireless networks is limited not only by constraints on communication and computational resources but also by significant data heterogeneity among clients, particularly in large-scale networks. This paper first presents a theoretical analysis of the impact of client data heterogeneity on global model generalization error, which can result in repeated training cycles, increased energy consumption, and prolonged latency. Based on the theoretical insights, an optimization problem is formulated to jointly minimize learning latency and energy consumption while constraining generalization error. A joint client selection and resource allocation (CSRA) approach is then proposed, employing a series of convex optimization and relaxation techniques. Extensive simulation results demonstrate that the proposed CSRA scheme yields higher test accuracy, reduced learning latency, and lower energy consumption compared to baseline methods that do not account for data heterogeneity.

\end{abstract}

\begin{IEEEkeywords}
Federated learning, data heterogeneity, generalization error, resource allocation.
\end{IEEEkeywords}

\section{Introduction} \label{introduction}
\IEEEPARstart{T}{he} rapid proliferation of mobile edge devices has driven the expansion of artificial intelligence (AI)-powered applications such as autonomous driving and smart cities \cite{9369324}. 
Supported by AI, these applications can leverage large volumes of training data, such as images and videos, to enhance the adaptability of the model across diverse scenarios and users \cite{arivazhagan2019federated}. However, collecting sufficient training data from multiple edge devices can impose significant transmission overhead and expose raw data to the public \cite{tian2022fedbert}. Federated learning (FL) presents a promising distributed and collaborative solution to this challenge by enabling mobile edge devices to train a shared model without the need to exchange raw data \cite{mcmahan2017communication}. 
With its collaborative framework, FL allows lightweight devices to acquire knowledge from distributed data silos, thereby reducing centralized data dependence and enhancing the generalization ability of the global model compared with isolated local training \cite{nguyen2021federated}. 
Consequently, FL alleviates the computation burden on the central server and offers a scalable approach for resource-limited networks. Nevertheless, several challenges remain in deploying FL over wireless networks. \textit{1) Training cost}. Frequent local model uploads combined with stringent latency requirements incur substantial communication overhead and intensive utilization of computational resources, resulting in substantial energy consumption. Moreover, the naive approach in which all clients participate in every training round introduces excessive training overhead, especially in large-scale FL deployments \cite{9237168}.
\textit{2) Data heterogeneity}. The spatio-temporal variability of client locations and personalized usage behaviors leads to heterogeneous data distributions across clients. Such heterogeneity induces a discrepancy between local and global optimization objectives, which in turn causes significant variations in model update directions across training rounds. In extreme cases, the final global model may become biased toward specific local datasets, thereby degrading its performance on the population dataset and even leading to model divergence \cite{zhao2018federated}.

Various strategies have been proposed to reduce training costs in FL. One promising approach is the selective client participation, which has been shown to be more efficient than full client participation. By targeting the most suitable clients, this method not only reduces computational workload but also enhances radio resource utilization, particularly under communication constraints such as bandwidth limitations, energy budgets, and latency-sensitive requirements \cite{9237168, vehicle_selection_fl, 9264742, 9712615, 9709639, 9210812, 9261995}.  
However, these approaches largely neglect the effects of data heterogeneity among clients. Selecting clients solely based on channel conditions tends to bias the global model update toward the local datasets of well-connected clients, causing a divergence from the true population-level gradient descent direction of the global model. Consequently, the learned model may suffer performance degradation on the population dataset, reducing generalization capability. Additionally, under highly skewed data distributions, such bias can significantly increase the number of training rounds needed to achieve target accuracy, leading to higher training costs.  

Several approaches attempted to address the adverse effects of data heterogeneity on model performance \cite{ 10375295, 9528995, 10192896, sattler2020clustered, 10241292, 10876775, 9882362, 10818753, 9741255,11037111,chen2024heterogeneity}. 
In \cite{10375295, 9528995, 10192896, sattler2020clustered, 10241292}, data heterogeneity is defined as the similarity of local and global loss gradients, or the similarity among local gradients across clients. Specifically, \cite{10375295} aims to accelerate FL convergence by selecting clients whose local gradients align with the global gradient. By adjusting aggregation weights according to gradient discrepancies, \cite{9528995} targets to correct the model bias induced by non-independent and identically distributed (non-IID) data.
Methods in \cite{sattler2020clustered, 10192896, 10241292} further cluster clients based on gradient similarity and train a separate model for each cluster, implicitly assuming that clients within the same cluster share similar data distributions. However, these gradient-based approaches are founded on the implicit assumption that gradient discrepancies can accurately represent data distribution heterogeneity, lacking a rigorous theoretical analysis of this connection. Moreover, the computation of gradient similarity incurs substantial calculation costs, which limits its practicality in large-scale FL networks. 
In contrast, studies in \cite{10876775, 9882362, 10818753, 9741255, 11037111,chen2024heterogeneity} explicitly define data heterogeneity in terms of the divergence between local and global data distributions. These studies aim to mitigate the effects of data divergence by aggregating datasets or by selectively choosing clients whose local distributions are more aligned with the global distribution. While they demonstrate the advantages of client selection based on data distribution, there is a lack of robust guarantees concerning global model generalization error, which can result in repeated training cycles, ultimately leading to increased energy consumption and prolonged latency in FL. Moreover, the existing methods \cite{10375295, 10876775, 9882362,10818753, sattler2020clustered, 10192896, 10241292, 9741255, 9528995, 11037111,chen2024heterogeneity} primarily focus on mitigating the impact of non-IID data on the convergence process, by minimizing empirical loss for the clients participating in each training round. However, the population loss, representing the model's actual performance across all potential data, remains largely unaddressed. Optimizing empirical loss solely for the participating clients does not ensure that the model performs well on the broader dataset. Consequently, ensuring the effectiveness and generalization capability of the global model in FL continues to be a significant challenge. Furthermore, the majority of these methods neglect channel variations in wireless networks, which further impede global model performance. Clients differ in dataset size, data distribution, and communication channel conditions. Therefore, considering any single factor in isolation is insufficient to ensure FL effectiveness in wireless networks, which depends on the interplay between training cost and global model generalization capability.


Motivated by the challenges identified in FL within wireless networks, this paper aims to reduce training costs jointly associated with learning latency and energy consumption, while considering the impact of data heterogeneity on global model performance, which is represented by the global model generalization error. Specifically, we characterized client data heterogeneity and theoretically analyzed the upper bound of generalization error by decomposing it into a generalization-related error and a training-related error. Based on the analytical results, we formulated the interplay between FL training cost and global model generalization capability as a training cost minimization problem, imposing constraints on generalization error for each training round. To address the complexity of the original problem, we decompose it into tractable subproblems and propose a dedicated algorithm to solve it effectively. The main contributions of this paper are summarized as follows:
%
%
%
\begin{itemize}
	\item[$\bullet$] 
	We provided the theoretical analysis of the impact of data heterogeneity on FL model performance, revealing that data heterogeneity, quantified by the KL divergence between local and global data distributions, interacts with the algorithmic stability of FL, influencing generalization performance.
	Furthermore, we proved that when the FL algorithm is uniformly stable, reducing distribution divergence among clients and increasing the total dataset size can significantly tighten the generalization error bounds.
	\item[$\bullet$] By imposing the FL global model performance as a constraint, we formulated a training cost minimization problem, and decomposed it into two sub-problems, recognizing it as a mixed-integer non-linear problem (MINLP). We then proposed a joint client selection and resource allocation approach (CSRA) to efficiently solve the subproblems with low computational complexity. 
	\item[$\bullet$] 
	Extensive simulations were conducted to evaluate the effects of data distribution divergence and dataset size on FL performance.  The results demonstrated that the proposed CSRA scheme significantly enhances model performance while reducing latency and energy consumption compared to baseline methods.	
\end{itemize}

The remainder of this paper is organized as follows: Section \ref{sec:System_model} presents the system model, data heterogeneity, and training cost. Section \ref{sec:Impact of data heterogeneity} analyzes the impact of data heterogeneity on FL. The training cost optimization problem is formulated, and a joint client selection and resource allocation method is proposed in Section \ref{problem}. 
Section \ref{simulation} discusses the simulations and results. Finally, Section \ref{conclusions} concludes the paper.
\begin{figure}[tbp]
	\centering
	\vspace{-8pt}
	\includegraphics[scale=0.63]{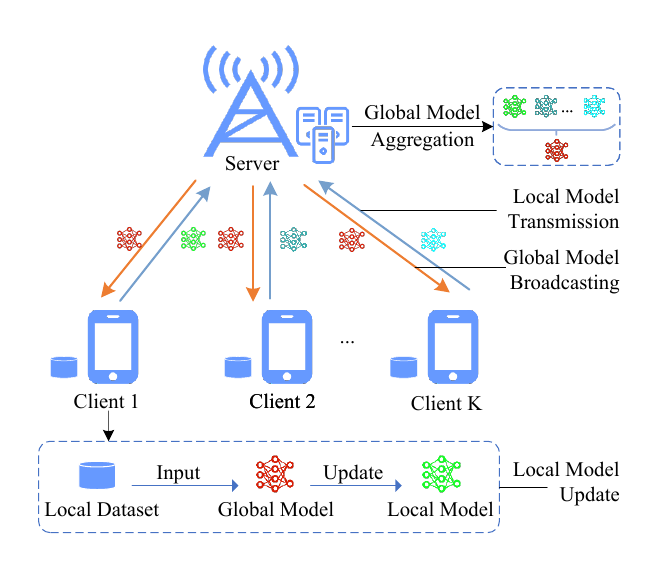}
	\caption{The architecture of federated learning over a wireless network.} \label{fig:pfl_system_model}
	\vspace{-10pt}
\end{figure}
\section{System Model}\label{sec:System_model}
As shown in Fig.~\ref{fig:pfl_system_model}, we consider a network including $K$ clients, represented by the set $\mathcal{K} = \{1, \cdots, k, \cdots, K\}$, which are connected to a central server via wireless channels. Each client $k$ possesses a local dataset $\mathcal{D}_k$ with size $|\mathcal{D}_k|=d_k$, where each data sample is denoted by $\zeta \in \mathcal{D}_k$. For a given FL model $\bm{\omega}\in\mathbb{R}^n$, the loss function evaluated on $\zeta$ is denoted by $\ell(\bm{\omega};\zeta)$. FL is performed with the target of minimizing the  empirical loss over all local datasets, that is
\begin{align}
	\hat{l}(\bm{\omega}) = \frac{1}{d}\sum_{k=1}^K\sum_{\zeta\in\mathcal{D}_k}\ell(\bm{\omega},\zeta)=\sum_{k=1}^K\frac{d_k}{d}\hat{l}_k(\bm{\omega}),
\end{align} 
where $d=\sum_{k=1}^Kd_k$ is the total number of data, and $\hat{l}_k(\bm{\omega}) = \sum_{\zeta \in \mathcal{D}_k}\frac{1}{d_k}\ell(\bm{\omega},\zeta)$ denotes the local loss of client $k$.

Generally, the FL training process in one round consists of four steps, including local model update, local model upload, global model update, and global model download.

\textit{\textbf{Local Model Update}}:  At the beginning of each training round $t$, the clients receive the aggregated model $\bm{\omega}_{t}$ from the server. Each client implements $E$ epochs of the batch gradient descent (BGD) method. Specifically, the local model is initialized as $\bm{\omega}_{t,0}^k = \bm{\omega}_{t}$. In each epoch $\tau$, $\tau \in[1,E]$, client $k$ samples a mini-batch $\bm{\zeta}_{t,\tau}^k$ of size $\tilde{d}$ from $\mathcal{D}_k$, then the local model is updated as 
\begin{align}
	\bm{\omega}_{t,\tau+1}^k = \bm{\omega}_{t,\tau}^k - \frac{\eta_t}{\tilde{d}}\sum_{\zeta \in \bm{\zeta}_{t,\tau}^k}\nabla \ell(\bm{\omega}_{t,\tau}^k, \zeta),
\end{align}
where $\eta_t$ is the learning rate of FL.

\textit{\textbf{Local Model Upload}}:
After $E$ local update epochs, each client $k$ transmits its updated model $\bm{\omega}_{t,E}^{k}$ to the central server.

\textit{\textbf{Global Model Update}}:
The central server aggregates the model parameters $\bm{\omega}_{t,E}^k$ through weighted averaging as
\begin{align} \label{model_a}
	\bm{\omega}_{t+1}=\sum_{k=1}^{K}\frac{d_k}{d}\bm{\omega}_{t,E}^{k}.
\end{align}

\textit{\textbf{Global Model Download}}: The model $\bm{\omega}_{t+1}$ is broadcast to the clients for training round $t+1$.

In FL, personalized data collection across clients leads to heterogeneous local data distributions, which can degrade model performance.
Additionally, repeated local training and model transmission incur significant computation and communication overhead.
Motivated by these observations, we next model data heterogeneity and training cost to jointly address these two bottlenecks in effective FL.
\subsection{Data Distribution Heterogeneity Model}
To capture the statistics of data heterogeneity at the population level, we explicitly model local data distributions as following an underlying data generation process.
Let $\mathcal{Z}$ denote the data domain. Let $\mathcal{Q}$ denote the set of all local data distributions defined over $\mathcal{Z}$, and let $\mathcal{P}$ denote a meta-distribution over $\mathcal{Q}$. The global data distribution is defined as $p_g = \mathbb{E}_{p \sim \mathcal{P}}[p]$.
For each client $k \in \mathcal{K}$, its local distribution $p_k$ is drawn from $\mathcal{P}$, and the local dataset $\mathcal{D}_k$ consists of samples $\zeta \in \mathcal{Z}$ drawn from $p_k$. The dataset of all clients is denoted by $\mathcal{D}=\mathcal{D}_1\cup \mathcal{D}_2\cup\cdots\cup \mathcal{D}_K$.
By characterizing the dynamic nature of client distributions while remaining agnostic to any specific individual client, this modeling approach is more adaptable to client dynamics in large-scale wireless deployments \cite{hu2023generalization}.
Based on this model, the data distribution heterogeneity, also called data distribution divergence, is measured by the Kullback-Leibler (KL) divergence between the global distribution $p_g$ and the local distribution $p_k$, denoted by $D_{kl}(p_g \| p_k), \forall k \in \mathcal{K}$. Its detailed definition and impact on the global model performance will be explained in Section III.

\subsection{Training Cost Model}
\subsubsection{Computation Cost}
Given the limited local computation capability, the computation energy and latency for local model updates require consideration. Let $s_k$ and $f_{k,t}$ denote the number of CPU cycles required to process one bit and the CPU clock speed of device $k$, respectively. The computation latency for local model update is formulated as \cite{10057044}
\begin{align}
	{\mathcal{T}}_{k,t}^{com}=\frac{Es_kd_{k}}{f_{k,t}}.
\end{align}
Let $\epsilon$ represent the effective switched capacitance determined by the chip architecture. The computation energy per bit in one local epoch is $\epsilon s_k f_{k,t}^2$ \cite{10057044}. Therefore, the energy consumption of $E$ local model updates is accordingly calculated by
\begin{align}
	{\mathcal{E}}_{k,t}^{com}=d_k\epsilon s_k f_{k,t}^2E.
\end{align}

\subsubsection{Communication Cost} We now consider the costs incurred by frequent model exchange between clients and the server. Let $B$ denote the total uplink bandwidth. The proportion of uplink bandwidth allocated to client $k$ is denoted by $b_{k,t}$. 
Let $h_{k,t}$ be the small-scale fading, which follows a complex Gaussian distribution. 
The large-scale path loss is $\theta_k = (\frac{c_0}{4\pi f_c})^2({dist_k})^{-\gamma}$ \cite{5226961}, where $c_0$ is the speed of light, $f_c$ is the carrier frequency, $dist_k$ is the distance from client $k$ to the server, $\gamma$ is the path loss exponent. The transmit power is $P_{k}$, and the power spectrum density of
noise is $N_0$. 
Then, the uplink channel transmission rate of client $k$ in round $t$ is represented by
\begin{align}
	r_{k,t}=b_{k,t}Blog_2\left(1+\frac{\theta_k|h_{k,t}|^2P_{k}}{BN_0}\right).
\end{align}
Let $C_k$ denote the size of the local model parameters of client $k$. The transmission latency for client $k$ to upload its local model to the server is given by
\begin{align}
	{\mathcal{T}}_{k,t}^{up}=\frac{C_{k}}{r_{k,t}}.
\end{align}
The transmission energy of client $k$ is expressed as
\begin{align}
	{\mathcal{E}}_{k,t}^{up}=P_k{\mathcal{T}}_{k,t}^{up}.
\end{align}
The downlink latency of broadcasting the global model from the server to all clients is negligible compared with the uplink transmission latency of uploading local models \cite{vehicle_selection_fl}.

%

\section{Impact of data Heterogeneity on FL}\label{sec:Impact of data heterogeneity}
As explained in the Introduction, the effectiveness of FL in wireless networks relies on both training costs and global model generalization performance, which can be represented by global model generalization error. Accordingly, this section will theoretically analyze the impact of data distribution heterogeneity on generalization performance, which is critical for optimizing FL training cost. Moreover, the analysis will also consider the influence of the total dataset size $d$.
\subsection{Preliminaries}

\begin{definition} \label{definition dataset}
We index all data samples in $\mathcal{D}$ as
$\mathcal{D} = \{\zeta^{(1)}, \zeta^{(2)}, \ldots, \zeta^{(d)}\}$.
Let $\mathcal{D}^{(i)}$ denote a neighboring dataset obtained by replacing
the $i$-th data sample $\zeta^{(i)}$ with $\hat{\zeta}^{(i)}$, where
$\hat{\zeta}^{(i)} \in \mathcal{Z}$ and $\hat{\zeta}^{(i)} \neq \zeta^{(i)}$.
Let $\mathcal{D}'=\mathcal{D}'_1\cup \mathcal{D}'_2\cup\cdots\cup \mathcal{D}'_K$ be a dataset independently sampled from the same underlying
local data distributions $\{p_1, p_2, \ldots, p_K\}$ and sharing the same local
dataset sizes as $\mathcal{D}$.
Correspondingly, the models trained on datasets $\mathcal{D}^{(i)}$ and
$\mathcal{D}'$ at round $t$ are denoted by $\bm{\omega}_t^{(i)}$ and
$\bm{\omega}_t'$, respectively.
Moreover, the empirical losses evaluated on $\mathcal{D}^{(i)}$ and
$\mathcal{D}'$ are denoted by $\hat{l}_{\mathcal{D}^{(i)}}(\cdot)$ and
$\hat{l}_{\mathcal{D}'}(\cdot)$, respectively.
\end{definition}
\begin{definition}
	The population loss $l(\bm{\omega})$ measures the expected loss of the model $\bm{\omega}$ over the global data distribution $p_g$, and is defined as
 $l(\bm{\omega}):=\mathbb{E}_{\zeta \sim p_g}[\ell(\bm{\omega}, \zeta)]$.
\end{definition}
%
%
\begin{definition}\label{def_semi}
	The semi-empirical loss is defined as $\overline{l}(\bm{\omega}) := \frac{1}{d} \sum_{k=1}^K d_k l_k(\bm{\omega})$, where $l_k(\bm{\omega}) = \mathbb{E}_{\zeta\sim p_k} [\ell (\bm{\omega};\zeta)]$ \cite{hu2023generalization}. 
\end{definition}
\begin{remark}
	The semi-empirical loss $\overline{l}(\bm{\omega})$ lies between the empirical and population losses.
	It replaces the empirical loss with expectations over local data distributions of all clients, while retaining practical client participation through the aggregated weights $\{d_k/d\}_{k=1}^K$.
	Thus, $\overline{l}(\bm{\omega})$ serves as an intermediate quantity that captures client distribution heterogeneity without conditioning on a specific dataset.
\end{remark}

\begin{definition}\label{def 3}
	The generalization error is defined as $\tilde e(\bm{\omega}):= l(\bm{\omega}) - \hat{l}(\bm{\omega})$, which measures the gap between the population loss and the empirical loss.
\end{definition}
To enable the theoretical analysis, we make the following assumptions, which are commonly adopted in \cite{SunMimic2024,gu2021fast,cho2022towards, li2019convergence,bousquet2002stability}.
\begin{assumption}\label{asm:iid-clients}
	The local data distributions $\{p_k\}_{k\in\mathcal{K}}$ are mutually independent.
    Data samples in each local dataset are independently drawn from the corresponding local distribution.
\end{assumption}
\begin{assumption}\label{ass:smooth}
For all $\zeta\in\mathcal{Z}$, the loss function $\ell(\cdot,\zeta)$
	is differentiable, $\lambda$-smooth, and $L$-Lipschitz over the model space $\Theta \subseteq \mathbb{R}^n$, i.e.,$
	\ell(\bm{u};\zeta)
	\le \ell(\bm{v};\zeta)
	+ (\bm{u}-\bm{v})^{\mathsf T}\nabla \ell(\bm{v};\zeta)
	+ \frac{\lambda}{2}\|\bm{u}-\bm{v}\|^2$,
	and
	$
	|\ell(\bm{u};\zeta)-\ell(\bm{v};\zeta)|
	\le L\|\bm{u}-\bm{v}\|,
	\quad \forall \bm{u},\bm{v}\in\Theta,\ \forall \zeta\in\mathcal{Z}.
	$
\end{assumption}

\begin{assumption}\label{bounded_loss}
	For all $\zeta\in\mathcal{Z}$ and $\bm{\omega}\in\Theta$, the loss function $\ell(\bm{\omega};\zeta)$ is bounded, i.e., $ 0 \le \ell(\bm{\omega};\zeta) \le c $.
\end{assumption}
\begin{assumption}\label{ass:stable}
The FL procedure is $\beta$-uniformly stable, i.e., 
$
\big|\ell(\bm{\omega}, \zeta)
- \ell(\bm{\omega}^{(i)}, \zeta)\big|
\le \beta, \forall \zeta \in \mathcal{Z}
$.
\end{assumption}

\begin{assumption}\label{ass sgd}
	The stochastic gradient $\frac{1}{\tilde{d}}\sum_{\zeta \in \bm{\zeta}^k}\nabla \ell(\bm{\omega}, \zeta)$ on batch $\bm{\zeta}^k\in\mathcal{D}_k$ is an unbiased estimation of full batch, and its variance is bounded, i.e., $\mathbb{E}[\frac{1}{\tilde{d}}\sum_{\zeta \in \bm{\zeta}^k}\nabla \ell(\bm{\omega}, {\zeta})]=\nabla \hat{l}_k(\bm{\omega})$, and $\mathbb{E}||\frac{1}{\tilde{d}}\sum_{\zeta \in \bm{\zeta}^k}\nabla \ell(\bm{\omega}, \zeta)-\nabla \hat{l}_k(\bm{\omega})||^2\leq\sigma^2, \forall k \in \mathcal{K}$.
\end{assumption}

\subsection{Theoretical Analysis}
\begin{lemma}\label{lemma1}	
	With Assumption \ref{bounded_loss}, define  \(\tilde e_k := \hat{l}_k'(\cdot)-l_k(\cdot)\), where $\hat{l}'_k(\cdot) = \sum_{\zeta \in \mathcal{D}'_k}\frac{1}{d_k}\ell(\cdot,\zeta)$
    , then the following inequality holds for all \(k \in \mathcal{K}\):
	\begin{align} \label{DV3}
		\log\mathbb{E}_{\zeta\sim p_k}\left[\exp({\frac{d_k\tilde e_k}{d}})\right] \geq \frac{d_k}{d} \mathbb{E}_{\zeta\sim p_g}[\tilde e_k] - D_{kl}(p_g \| p_k).
	\end{align}
\end{lemma}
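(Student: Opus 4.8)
The plan is to obtain \eqref{DV3} as a direct instance of the Donsker--Varadhan variational representation of the Kullback--Leibler divergence. Recall that for probability measures $P$ and $Q$ on $\mathcal{Z}$ with $P$ absolutely continuous with respect to $Q$ and any measurable test function $g$ satisfying $\mathbb{E}_{Q}[e^{g}]<\infty$, one has $\mathbb{E}_{P}[g]-\log\mathbb{E}_{Q}[e^{g}]\le D_{kl}(P\|Q)$, since $D_{kl}(P\|Q)$ is precisely the supremum of the left-hand side over all such $g$. Fixing the round index $t$ and the model argument $\bm{\omega}$, I would read $\tilde e_k$ inside the single-sample expectation as the centered per-sample loss $\tilde e_k(\zeta)=\ell(\bm{\omega},\zeta)-l_k(\bm{\omega})$ --- the per-sample form of $\hat l'_k(\bm{\omega})-l_k(\bm{\omega})$, which is the operative reading here because $\mathcal{D}'_k$ is i.i.d.\ from $p_k$ under Assumption~\ref{asm:iid-clients} --- and apply the representation with $P=p_g$, $Q=p_k$, and $g(\zeta)=\tfrac{d_k}{d}\tilde e_k(\zeta)$.

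Then I would check the two hypotheses for this choice. If $p_g$ is not absolutely continuous with respect to $p_k$, then $D_{kl}(p_g\|p_k)=+\infty$, the right-hand side of \eqref{DV3} is $-\infty$, and there is nothing to prove; so assume $p_g\ll p_k$. For the exponential-moment condition, Assumption~\ref{bounded_loss} gives $0\le\ell(\bm{\omega},\zeta)\le c$, hence $|\tilde e_k(\zeta)|\le c$ and $\mathbb{E}_{\zeta\sim p_k}[\exp(\tfrac{d_k}{d}\tilde e_k)]\le e^{d_k c/d}<\infty$ (the corresponding $p_g$-expectation is finite for the same reason). The Donsker--Varadhan bound then reads $\tfrac{d_k}{d}\mathbb{E}_{\zeta\sim p_g}[\tilde e_k]-\log\mathbb{E}_{\zeta\sim p_k}[\exp(\tfrac{d_k}{d}\tilde e_k)]\le D_{kl}(p_g\|p_k)$, and rearranging is exactly \eqref{DV3}. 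Nothing in the argument depends on $k$, so it holds for every $k\in\mathcal{K}$.

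If a self-contained derivation is wanted in place of citing the representation, I would verify the identity $D_{kl}(p_g\|p_k)-\tfrac{d_k}{d}\mathbb{E}_{p_g}[\tilde e_k]+\log\mathbb{E}_{p_k}[e^{(d_k/d)\tilde e_k}]=\mathbb{E}_{\zeta\sim p_g}\big[-\log Y\big]$ with $Y=\big(e^{(d_k/d)\tilde e_k}/\mathbb{E}_{p_k}[e^{(d_k/d)\tilde e_k}]\big)\big/\big(\mathrm{d}p_g/\mathrm{d}p_k\big)$ by merging the three logarithms and using $D_{kl}(p_g\|p_k)=\mathbb{E}_{p_g}[\log(\mathrm{d}p_g/\mathrm{d}p_k)]$, then apply Jensen's inequality to the convex map $-\log$: since $\mathbb{E}_{p_g}[Y]=\mathbb{E}_{p_k}[e^{(d_k/d)\tilde e_k}]/\mathbb{E}_{p_k}[e^{(d_k/d)\tilde e_k}]=1$ (the Radon--Nikodym factor cancels and the tilted density integrates to $1$ against $p_k$), Jensen gives $\mathbb{E}_{p_g}[-\log Y]\ge-\log\mathbb{E}_{p_g}[Y]=0$, which is again \eqref{DV3} after rearrangement. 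This mirrors the change-of-measure step used in \cite{hu2023generalization}.

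The substantive work here is not any calculation but pinning down the correct reading of $\tilde e_k$: the lemma statement introduces it via the $d_k$-sample average $\hat l'_k$, whereas \eqref{DV3} places it inside a single-sample expectation $\mathbb{E}_{\zeta\sim p_k}[\cdot]$. I would make explicit that the per-sample interpretation is the only one consistent with the stated bound --- a product-measure/tensorized reading would produce a KL term $d_k\,D_{kl}(p_g\|p_k)$ rather than $D_{kl}(p_g\|p_k)$ --- and once that identification is fixed, \eqref{DV3} is an immediate consequence of Donsker--Varadhan.
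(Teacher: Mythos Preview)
Your proposal is correct and follows essentially the same route as the paper: both arguments apply the Donsker--Varadhan variational formula with the test function $q_k=\tfrac{d_k}{d}\tilde e_k$, invoke boundedness of the loss (Assumption~\ref{bounded_loss}) to justify the application, and then rearrange. Your treatment is more careful than the paper's in explicitly handling the absolute-continuity edge case, supplying a self-contained Jensen derivation, and flagging the per-sample reading of $\tilde e_k$, but the underlying idea is identical.
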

\begin{proof}
	Let $q_k={d_k\tilde{e}_k}/{d}$. Under Assumption \ref{bounded_loss}, $q_k$ is bounded as $|q_k|\leq {d_kc}/{d}$, which satisfies the requirement of Donsker and Varadhan's variational formula \cite{donsker1983asymptotic}. 
	Therefore, for any $k \in \mathcal{K}$, the KL divergence between $p_k$ and $p_g$ is given by
	\begin{align}\label{dv}
		D_{kl}(p_k \| p_k) &= \sup_{q_k} (\mathbb{E}_{\zeta \sim p_g}[q_k] - \log\mathbb{E}_{\zeta \sim p_k}[\exp(q_k)]) \nonumber\\
		&\geq \mathbb{E}_{\zeta \sim p_g}[q_k] - \log\mathbb{E}_{\zeta \sim p_k}[\exp(q_k)].
	\end{align}
	Substituting $q_k={d_k\tilde{e}_k}/{d}$ into (\ref{dv}) completes the proof.
\end{proof}


\begin{lemma}\label{SGD}
	With Assumptions \ref{ass:smooth} and \ref{ass sgd}, the difference between FL global models in rounds $t$ and $t-1$, 
    satisfies
	\begin{align}\label{one_step}
		\mathbb{E}||\bm{\omega}_{t}-\bm{\omega}_{t-1}||\leq2(E-1)\left(\frac{2\eta_{t-1}^2\lambda}{E}\Delta_{t-1}+\frac{\eta^2_{t-1}\sigma^2}{E^2}\right),
	\end{align}
	where $\Delta_{t-1}=\mathbb{E}[\hat{l}(\bm{\omega}_{t-1})-\sum_{k=1}^K\frac{d_k}{d}\hat{l}_k(\bm{\omega}^*_k)]$.
	\begin{proof}
		From~\eqref{model_a}, $	\bm{\omega}_{t}=\sum_{k=1}^{K}\frac{d_k}{d}\bm{\omega}_{{t-1},E}^{k}$, we have
		\begin{align}
			&\mathbb{E}||\bm{\omega}_{t}-\bm{\omega}_{t-1}||
			=\mathbb{E}||\sum_{k=1}^K\frac{d_k}{d}(\bm{\omega}_{{t-1},E}^k-\bm{\omega}_{t-1})||\nonumber\\
			\overset{(a)}{\leq}&2(E-1)\left(\frac{2\eta_{t-1}^2}{E}\sum_{k=1}^K\frac{d_k}{d}\mathbb{E}||\nabla\hat{l}_k(\bm{\omega}_{t-1})||^2+\frac{\eta^2_{t-1}\sigma^2}{E^2}\right)\nonumber\\
			\overset{(b)}\leq&2(E-1)\left(\frac{2\eta_{t-1}^2\lambda}{E}\sum_{k=1}^K\frac{d_k}{d}\mathbb{E}(\hat{l}_k(\bm{\omega}_{t-1})-\hat{l}_k(\bm{\omega}^*_k))+\frac{\eta^2_{t-1}\sigma^2}{E^2}\right)\nonumber\\
			=&2(E-1)\left(\frac{2\eta_{t-1}^2\lambda}{E}\Delta_{t-1}+\frac{\eta^2_{t-1}\sigma^2}{E^2}\right),
		\end{align}
		where (a) follows from Assumption \ref{ass sgd} and Lemma C.2 in \cite{gu2021fast}, and (b) follows from \(\lambda\)-smooth in Assumption \ref{ass:smooth}.
	\end{proof}
\end{lemma}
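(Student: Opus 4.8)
The plan is to control $\mathbb{E}\|\bm{\omega}_{t}-\bm{\omega}_{t-1}\|$ by reducing the one-round global displacement to a weighted average of per-client \emph{local drifts} $\bm{\omega}_{t-1,E}^{k}-\bm{\omega}_{t-1}$ and then bounding each such drift. First I would invoke the aggregation rule~\eqref{model_a} together with $\sum_{k=1}^{K} d_k/d = 1$ to write $\bm{\omega}_{t}-\bm{\omega}_{t-1}=\sum_{k=1}^{K}\frac{d_k}{d}\bigl(\bm{\omega}_{t-1,E}^{k}-\bm{\omega}_{t-1}\bigr)$. Applying the triangle inequality (equivalently, convexity of $\|\cdot\|$ and Jensen's inequality) pushes the weights and the expectation inside, giving $\mathbb{E}\|\bm{\omega}_{t}-\bm{\omega}_{t-1}\|\le\sum_{k=1}^{K}\frac{d_k}{d}\,\mathbb{E}\|\bm{\omega}_{t-1,E}^{k}-\bm{\omega}_{t-1}\|$.

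Next I would bound the local drift of a single client. Unrolling the $E$ batch-gradient-descent epochs, $\bm{\omega}_{t-1,E}^{k}-\bm{\omega}_{t-1}$ is a sum of $E-1$ stochastic-gradient increments taken along the local trajectory. Decomposing each stochastic gradient into the full-batch gradient $\nabla\hat{l}_k$ at the corresponding iterate plus a zero-mean perturbation, I would use $\lambda$-smoothness (Assumption~\ref{ass:smooth}) to relate $\nabla\hat{l}_k$ at the intermediate iterates back to $\nabla\hat{l}_k(\bm{\omega}_{t-1})$, and the bounded-variance condition (Assumption~\ref{ass sgd}) to absorb the perturbations; this is precisely the local-drift estimate recorded as Lemma~C.2 in~\cite{gu2021fast}. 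Invoking it yields $\mathbb{E}\|\bm{\omega}_{t-1,E}^{k}-\bm{\omega}_{t-1}\|\le 2(E-1)\bigl(\frac{2\eta_{t-1}^{2}}{E}\,\mathbb{E}\|\nabla\hat{l}_k(\bm{\omega}_{t-1})\|^{2}+\frac{\eta_{t-1}^{2}\sigma^{2}}{E^{2}}\bigr)$, and averaging over $k$ with weights $d_k/d$ replaces $\|\nabla\hat{l}_k(\bm{\omega}_{t-1})\|^{2}$ by $\sum_{k=1}^{K}\frac{d_k}{d}\mathbb{E}\|\nabla\hat{l}_k(\bm{\omega}_{t-1})\|^{2}$.

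To reach the stated form I would then trade the squared gradient norm for a function-value suboptimality gap: for a $\lambda$-smooth function that is bounded below, $\|\nabla\hat{l}_k(\bm{\omega})\|^{2}\le 2\lambda\bigl(\hat{l}_k(\bm{\omega})-\hat{l}_k(\bm{\omega}_k^{*})\bigr)$, where $\bm{\omega}_k^{*}$ minimizes $\hat{l}_k$. Instantiating this at $\bm{\omega}=\bm{\omega}_{t-1}$, taking expectations, and using $\sum_{k=1}^{K}\frac{d_k}{d}\hat{l}_k(\bm{\omega}_{t-1})=\hat{l}(\bm{\omega}_{t-1})$ converts $\sum_{k=1}^{K}\frac{d_k}{d}\mathbb{E}\|\nabla\hat{l}_k(\bm{\omega}_{t-1})\|^{2}$ into $2\lambda\,\Delta_{t-1}$; substituting this into the drift bound from the previous paragraph gives~\eqref{one_step}.

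I expect the real obstacle to be the per-client local-drift bound in the second step: it needs a careful induction over the $E$ inner epochs, tracking how the gap between the inner iterate $\bm{\omega}_{t-1,\tau}^{k}$ and its starting point $\bm{\omega}_{t-1}$ accumulates under smoothness, and it generally requires $\eta_{t-1}$ to be small enough for the resulting recursion to close with the clean prefactor $2(E-1)$. The remaining ingredients---the aggregation rewrite, Jensen's inequality, and the smoothness-to-gap conversion---are routine, so essentially all of the work is in (or is imported from) that one lemma.
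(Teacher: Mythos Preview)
Your proposal is correct and follows essentially the same route as the paper: rewrite the global step via the aggregation rule~\eqref{model_a}, invoke the per-client local-drift estimate (Lemma~C.2 of~\cite{gu2021fast}) under Assumptions~\ref{ass:smooth} and~\ref{ass sgd}, and then replace $\|\nabla\hat{l}_k(\bm{\omega}_{t-1})\|^{2}$ by the suboptimality gap using $\lambda$-smoothness. The only minor discrepancy is that the standard smoothness-to-gap inequality you state gives $2\lambda\,\Delta_{t-1}$, which would yield $4\eta_{t-1}^{2}\lambda/E$ rather than the $2\eta_{t-1}^{2}\lambda/E$ appearing in~\eqref{one_step}; this constant factor does not affect the argument or its downstream use.
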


\begin{lemma}\label{lemma_stability}Under Assumption 4, for any FL round $t$, the upper bound of $\mathbb{E}[\hat{l}_{\mathcal{D}'}(\bm{\omega}_t)-\hat{l}(\bm{\omega}_t)]$ satisfies
	\begin{align}
		&\mathbb{E}[\hat{l}_{\mathcal{D}'}(\bm{\omega}_t)-\hat{l}(\bm{\omega}_t)]\nonumber\\
		\leq&\sum_{t'=0}^{t-1}4(E-1)\left(\frac{2\eta_{t'}^2\lambda L}{E}\Delta_{t'}+\frac{\eta^2_{t'}\sigma^2L}{E^2}\right).			
	\end{align}
	\begin{proof}
		We start with the decomposition
		\begin{align}\label{decome}
			&\mathbb{E}[\hat{l}_{\mathcal{D}'}(\bm{\omega}_t)-\hat{l}(\bm{\omega}_t)]\nonumber\\
			=&\mathbb{E}[\hat{l}_{\mathcal{D}'}(\bm{\omega}_t)-\hat{l}_{\mathcal{D}'}(\bm{\omega}'_t)+\hat{l}_{\mathcal{D}'}(\bm{\omega}'_t)-\hat{l}(\bm{\omega}_t)]\nonumber\\
			=&\underbrace{\mathbb{E}[\hat{l}_{\mathcal{D}'}(\bm{\omega}_t)-\hat{l}_{\mathcal{D}'}(\bm{\omega}'_t)]}_{A1}+\underbrace{\mathbb{E}[\hat{l}_{\mathcal{D}'}(\bm{\omega}'_t)-\hat{l}(\bm{\omega}_t)]}_{A2}.
		\end{align}
		
		As in Definition \ref{definition dataset}, since $\mathcal{D}$ and $\mathcal{D}'$ are independently sampled from the same distribution, 
        $A2$ in (\ref{decome}) is derived as
		\begin{align}\label{zero}
			\mathbb{E}[\hat{l}_{D'}(\bm{\omega}_t')-\hat{l}(\bm{\omega}_t)]=0.
		\end{align}	
		
		We now bound $A1$ in \eqref{decome}. 		Assumption~\ref{ass:smooth} ensures that the loss is
		$L$-Lipschitz, thus
		\begin{align} \label{eq_lip}
			\mathbb{E}|\hat{l}_{\mathcal{D}'}(\bm{\omega}_t)
			-\hat{l}_{\mathcal{D}'}(\bm{\omega}'_t)|
			\le 
			L \mathbb{E}\|\bm{\omega}_t-\bm{\omega}'_t\|.
		\end{align} 
        We first adding and subtracting $\bm{\omega}_{t-1}$ in $\mathbb{E}\|\bm{\omega}_t-\bm{\omega}'_t\|$, and decompose it by the triangle inequality as
		\begin{align}\label{tri_rec}
			&\mathbb{E}\|\bm{\omega}_t-\bm{\omega}'_t\|\nonumber\\
            =&\mathbb{E}\|\bm{\omega}_t-\bm{\omega}_{t-1}
			+
			\bm{\omega}_{t-1}-\bm{\omega}'_{t-1}
			+
			\bm{\omega}'_{t-1}-\bm{\omega}'_{t}\| \nonumber\\
			\le&
			\mathbb{E}\|\bm{\omega}_t-\bm{\omega}_{t-1}\|
			+
			\mathbb{E}\|\bm{\omega}'_t-\bm{\omega}'_{t-1}\|
			+
			\mathbb{E}\|\bm{\omega}_{t-1}-\bm{\omega}'_{t-1}\|.
		\end{align}
		
		By Lemma~\ref{SGD}, 
		the bound holds for both $\mathbb{E}\|\bm{\omega}_t-\bm{\omega}_{t-1}\|$ and
		$\mathbb{E}\|\bm{\omega}'_t-\bm{\omega}'_{t-1}\|$. 
        Substituting \eqref{one_step} into \eqref{tri_rec}, we obtain
\begin{align}
\mathbb{E}\|\bm{\omega}_t-\bm{\omega}'_t\|
&\le
\mathbb{E}\|\bm{\omega}_{t-1}-\bm{\omega}'_{t-1}\|
\nonumber\\
&+4(E-1)\!\left(
\frac{2\eta_{t-1}^2\lambda}{E}\Delta_{t-1}
+
\frac{\eta_{t-1}^2\sigma^2}{E^2}\right).
\label{recursion}
\end{align}
Iterating \eqref{recursion} from $0$ to $t-1$ gives
\begin{align}\label{model_diff}
\mathbb{E}\|\bm{\omega}_t-\bm{\omega}'_t\|
&\le
\mathbb{E}\|\bm{\omega}_{0}-\bm{\omega}'_{0}\|\nonumber\\
&+
\sum_{t'=0}^{t-1}
4(E-1)\!\left(
\frac{2\eta_{t'}^2\lambda}{E}\Delta_{t'}
+
\frac{\eta_{t'}^2\sigma^2}{E^2}\right).
\end{align}
Note that the fact when $\bm{\omega}_{0}=\bm{\omega}_{0}'$, (\ref{model_diff}) yields
\begin{align}\label{model_diff1}
        \mathbb{E}\|\bm{\omega}_t-\bm{\omega}'_t\|
        \le
        \sum_{t'=0}^{t-1}
        4(E-1)\!\left(
        \frac{2\eta_{t'}^2\lambda}{E}\Delta_{t'}
        +
        \frac{\eta_{t'}^2\sigma^2}{E^2}
        \right)
\end{align}
		Substituting \eqref{model_diff1} into \eqref{eq_lip}, $A1$ yields
		\begin{align}\label{a1}
			&\mathbb{E}\big|\hat{l}_{\mathcal{D}'}(\bm{\omega}_t)
			-\hat{l}_{\mathcal{D}'}(\bm{\omega}'_t)\big|\nonumber\\
			\le&
			\sum_{t'=0}^{t-1}
			4(E-1)\!\left(
			\frac{2\eta_{t'}^2\lambda L}{E}\Delta_{t'}
			+
			\frac{\eta_{t'}^2\sigma^2L}{E^2}
			\right).
		\end{align}
		Combining with \eqref{zero} and \eqref{a1} completes the proof.
	\end{proof}
	
\end{lemma}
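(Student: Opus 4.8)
The plan is to run an algorithmic-stability argument built around a resampled ``ghost'' dataset. Using the independent copy $\mathcal{D}'$ and its induced model $\bm{\omega}'_t$ from Definition~\ref{definition dataset}, I would split the target quantity as
\begin{align*}
\hat{l}_{\mathcal{D}'}(\bm{\omega}_t)-\hat{l}(\bm{\omega}_t)
=\big(\hat{l}_{\mathcal{D}'}(\bm{\omega}_t)-\hat{l}_{\mathcal{D}'}(\bm{\omega}'_t)\big)
+\big(\hat{l}_{\mathcal{D}'}(\bm{\omega}'_t)-\hat{l}(\bm{\omega}_t)\big),
\end{align*}
and call the two terms $A_1$ and $A_2$. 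Here $A_1$ measures how far the trained model moves when the whole training set is swapped for an i.i.d. copy, while $A_2$ is a pure relabelling term that I expect to vanish in expectation.

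For $A_2$, the key observation is that $(\bm{\omega}_t,\mathcal{D})$ and $(\bm{\omega}'_t,\mathcal{D}')$ are identically distributed: $\mathcal{D}'$ is drawn from the same local distributions with the same per-client sizes as $\mathcal{D}$, and the FL update map (local minibatch steps followed by weighted aggregation, with its own minibatch/initialization randomness) is applied identically to both. Hence $\mathbb{E}[\hat{l}_{\mathcal{D}'}(\bm{\omega}'_t)]=\mathbb{E}[\hat{l}(\bm{\omega}_t)]$, so $\mathbb{E}[A_2]=0$. For $A_1$, the $L$-Lipschitz property of $\ell(\cdot,\zeta)$ in Assumption~\ref{ass:smooth} carries over to the finite average defining $\hat{l}_{\mathcal{D}'}$, giving $\mathbb{E}|A_1|\le L\,\mathbb{E}\|\bm{\omega}_t-\bm{\omega}'_t\|$. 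Thus the whole estimate reduces to controlling the trajectory divergence $\mathbb{E}\|\bm{\omega}_t-\bm{\omega}'_t\|$.

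To control that divergence I would set up a one-round recursion: inserting $\bm{\omega}_{t-1}$ and $\bm{\omega}'_{t-1}$ and applying the triangle inequality yields $\mathbb{E}\|\bm{\omega}_t-\bm{\omega}'_t\|\le\mathbb{E}\|\bm{\omega}_{t-1}-\bm{\omega}'_{t-1}\|+\mathbb{E}\|\bm{\omega}_t-\bm{\omega}_{t-1}\|+\mathbb{E}\|\bm{\omega}'_t-\bm{\omega}'_{t-1}\|$. Lemma~\ref{SGD} bounds each one-round drift by $2(E-1)\big(\tfrac{2\eta_{t-1}^2\lambda}{E}\Delta_{t-1}+\tfrac{\eta_{t-1}^2\sigma^2}{E^2}\big)$; applied to the primed trajectory it uses the \emph{same} $\Delta_{t-1}$ because the primed empirical quantities have the same expectation by the exchangeability just invoked. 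Adding the two drift bounds and unrolling the recursion from $t$ down to $0$, with the shared initialization $\bm{\omega}_0=\bm{\omega}'_0$ annihilating the boundary term, gives $\mathbb{E}\|\bm{\omega}_t-\bm{\omega}'_t\|\le\sum_{t'=0}^{t-1}4(E-1)\big(\tfrac{2\eta_{t'}^2\lambda}{E}\Delta_{t'}+\tfrac{\eta_{t'}^2\sigma^2}{E^2}\big)$. Multiplying by $L$ and adding $\mathbb{E}[A_2]=0$ produces the stated bound.

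I expect the main obstacle to be making the vanishing of $\mathbb{E}[A_2]$ fully rigorous and, for the same reason, justifying that Lemma~\ref{SGD} may be invoked on the primed trajectory with the identical $\Delta_{t'}$. Both rest on a careful exchangeability argument between $\mathcal{D}$ and $\mathcal{D}'$: one must keep the aggregation weights $d_k/d$ matched (true since the per-client sizes are matched) and couple the stochastic-gradient noise consistently across the two runs so that the two drift bounds coincide in expectation. The remaining ingredients — the two triangle-inequality insertions, the Lipschitz step, and the telescoping of the recursion — are routine.
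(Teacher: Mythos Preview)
Your proposal is correct and follows essentially the same approach as the paper: the identical decomposition into $A_1$ and $A_2$, the exchangeability argument giving $\mathbb{E}[A_2]=0$, the Lipschitz reduction of $A_1$ to $\mathbb{E}\|\bm{\omega}_t-\bm{\omega}'_t\|$, the triangle-inequality recursion with Lemma~\ref{SGD} applied to both trajectories, and the unrolling with $\bm{\omega}_0=\bm{\omega}'_0$. The subtleties you flag (justifying $\mathbb{E}[A_2]=0$ and the use of the same $\Delta_{t'}$ for the primed run) are exactly the points the paper glosses over.
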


\begin{theorem}\label{generalization}
	Under Assumptions \ref{asm:iid-clients}--\ref{ass sgd}, 
    for any $\delta \in (0,1)$, with probability at 
	least $1-\delta$ over the random draw of dataset $\mathcal{D}$, {\bf the 
	generalization error $\tilde e(\bm{\omega}_t)$ of the global model} at training round $t$
    , satisfies
	\begin{align} \label{gen}
		&\tilde e(\bm{\omega}_t)
		\leq \sum_{t'=0}^{t-1}4(E-1)\left(\frac{2\eta_{t'}^2\lambda L}{E}\Delta_{t'}+\frac{\eta^2_{t'}\sigma^2L}{E^2}\right)\nonumber\\
		&+\sqrt{\frac{c^2 \log \frac{4}{\delta}}{2}}\sigma_d^2 + \sum_{k=1}^{K} D_{kl}(p_g \| p_k) + {\frac{c^2}{8d}}+C_{\beta}.
	\end{align}
	where $\sigma_d^2 = {\sum_{k=1}^{K} \sqrt{d_k}}/{d}$, and $C_\beta = (2\beta+{c}/{d})\sqrt{d\ln({2}/{\delta})}$.
\end{theorem}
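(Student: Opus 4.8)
The plan is to decompose $\tilde e(\bm{\omega}_t)=l(\bm{\omega}_t)-\hat{l}(\bm{\omega}_t)$ by inserting the semi-empirical loss $\overline{l}(\bm{\omega}_t)$ of Definition~\ref{def_semi} and the ghost-dataset empirical loss $\hat{l}_{\mathcal{D}'}(\bm{\omega}_t)$ of Definition~\ref{definition dataset}:
\begin{align*}
\tilde e(\bm{\omega}_t)
&= \big[l(\bm{\omega}_t)-\overline{l}(\bm{\omega}_t)\big]
+ \big[\overline{l}(\bm{\omega}_t)-\hat{l}_{\mathcal{D}'}(\bm{\omega}_t)\big]\\
&\quad + \big[\hat{l}_{\mathcal{D}'}(\bm{\omega}_t)-\hat{l}(\bm{\omega}_t)\big] ,
\end{align*}
and denote the three brackets, in order, by $T_1$, $T_2$, $T_3$. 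These are controlled, respectively, by the change-of-measure inequality of Lemma~\ref{lemma1}, by a finite-sample concentration of the ghost averages around the local population losses, and by the stability bound of Lemma~\ref{lemma_stability}; together they produce the three groups of terms on the right-hand side of~\eqref{gen}.

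For $T_1=\sum_{k=1}^{K}\frac{d_k}{d}\big(l(\bm{\omega}_t)-l_k(\bm{\omega}_t)\big)$, I would apply Lemma~\ref{lemma1} at $\bm{\omega}=\bm{\omega}_t$, which is trained on $\mathcal{D}$ and hence independent of the fresh samples in $\mathcal{D}'_k$: under $p_g$ the ghost average $\hat{l}'_k$ has mean $l(\bm{\omega}_t)$, so $\frac{d_k}{d}\mathbb{E}_{\zeta\sim p_g}[\tilde e_k]=\frac{d_k}{d}\big(l(\bm{\omega}_t)-l_k(\bm{\omega}_t)\big)$, while $\frac{d_k\tilde e_k}{d}=\frac{1}{d}\sum_{\zeta\in\mathcal{D}'_k}\big(\ell(\bm{\omega}_t,\zeta)-l_k(\bm{\omega}_t)\big)$ is a sum of $d_k$ independent terms of range $c$ each scaled by $1/d$, so Hoeffding's lemma gives $\log\mathbb{E}_{\zeta\sim p_k}\big[\exp(d_k\tilde e_k/d)\big]\le d_kc^2/(8d^2)$. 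Combining with Lemma~\ref{lemma1} and summing over $k$ with $\sum_k d_k=d$ yields $T_1\le\sum_{k=1}^{K}D_{kl}(p_g\|p_k)+\frac{c^2}{8d}$, which is deterministic once $\bm{\omega}_t$ is fixed. It is essential to keep the per-sample $1/d$ scaling inside the exponent rather than a single sample scaled by $d_k/d$; otherwise the bound degrades to $\frac{c^2}{8d^2}\sum_k d_k^2$.

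For $T_3$, observe that $\mathbb{E}[T_3]=\mathbb{E}[\hat{l}_{\mathcal{D}'}(\bm{\omega}_t)-\hat{l}(\bm{\omega}_t)]$ is exactly the quantity bounded in Lemma~\ref{lemma_stability}, so $\mathbb{E}[T_3]\le\sum_{t'=0}^{t-1}4(E-1)\big(\frac{2\eta_{t'}^2\lambda L}{E}\Delta_{t'}+\frac{\eta_{t'}^2\sigma^2 L}{E^2}\big)$; and $\mathbb{E}[T_2]=0$ since $\mathbb{E}_{\mathcal{D}'}[\hat{l}_{\mathcal{D}'}(\bm{\omega}_t)\mid\mathcal{D}]=\overline{l}(\bm{\omega}_t)$. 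To pass from these expectations to the high-probability claim I would invoke McDiarmid's bounded-differences inequality. Replacing one sample $\zeta^{(i)}$ of $\mathcal{D}$ changes $\hat{l}(\bm{\omega}_t)$ by at most $\beta+c/d$ (the $\beta$ from $\beta$-uniform stability, Assumption~\ref{ass:stable}, applied term-by-term to the empirical average, and the $c/d$ from the single altered summand) and changes $\overline{l}(\bm{\omega}_t)$ or $\hat{l}_{\mathcal{D}'}(\bm{\omega}_t)$ by at most $\beta$, so the bounded-difference constant over the $d$ coordinates of $\mathcal{D}$ is at most $2\beta+c/d$, giving the slack $C_\beta=(2\beta+c/d)\sqrt{d\ln(2/\delta)}$ for $T_3$ after the crude $\sqrt{1/2}\le1$ relaxation. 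For $T_2$, conditioning on $\mathcal{D}$, a replacement inside $\mathcal{D}'_k$ perturbs $\hat{l}_{\mathcal{D}'_k}(\bm{\omega}_t)$ by at most $c/d_k$, so a per-client Hoeffding bound together with a union bound over the clients and over the two one-sided deviations yields $|T_2|\le\sqrt{\frac{c^2\log(4/\delta)}{2}}\sum_{k=1}^{K}\frac{\sqrt{d_k}}{d}=\sqrt{\frac{c^2\log(4/\delta)}{2}}\,\sigma_d^2$. A final union bound splitting the budget $\delta$ across the concentration events (e.g.\ $\delta/4+\delta/4+\delta/2$) and adding $T_1$'s deterministic bound gives~\eqref{gen}.

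The main obstacle is that $\bm{\omega}_t$ is an intricate function of the training data $\mathcal{D}$, so neither Hoeffding nor McDiarmid nor Donsker--Varadhan can be applied directly to $\hat{l}(\bm{\omega}_t)$ or $l(\bm{\omega}_t)$: the whole argument hinges on introducing the ghost dataset $\mathcal{D}'$ and using the stability estimates of Lemmas~\ref{SGD} and~\ref{lemma_stability} to decouple the model from the data it is evaluated on, and then on reassembling three heterogeneous error sources — distribution divergence, optimization drift $\Delta_{t'}$, and finite-sample fluctuation — under a single consistent failure-probability budget. A secondary subtlety, flagged above, is getting the scaling right in the change-of-measure step so the exponential-moment contributions telescope to the clean $c^2/(8d)$ rather than something growing with $\sum_k d_k^2$.
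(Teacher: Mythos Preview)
Your proposal is correct and follows essentially the same route as the paper: the paper first splits $\tilde e(\bm{\omega}_t)$ into $\tilde e'=l-\hat{l}_{\mathcal{D}'}$ and $\tilde e''=\hat{l}_{\mathcal{D}'}-\hat{l}$, and then breaks $\tilde e'$ into $l-\overline{l}$ and $\overline{l}-\hat{l}_{\mathcal{D}'}$, yielding exactly your $T_1,T_2,T_3$; each piece is then bounded with the same tools you name (Lemma~\ref{lemma1} plus Hoeffding's lemma for $T_1$, per-client Hoeffding for $T_2$, and Lemma~\ref{lemma_stability} plus McDiarmid for $T_3$). The only cosmetic difference is that the paper's McDiarmid step for $\tilde e''$ explicitly tracks bounded differences over \emph{both} $\mathcal{D}$ and $\mathcal{D}'$ (with constants $2\beta+c/d$ and $c/d$ respectively) rather than over $\mathcal{D}$ alone, but the resulting $C_\beta$ and the final union-bound bookkeeping match yours.
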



\begin{proof}
	Recall Definition \ref{def 3}, where $\tilde{e}(\bm{\omega}_t)=l(\bm{\omega}_t) - \hat{l}(\bm{\omega}_t)$. 
    By introducing the intermediate term $\hat{l}_{\mathcal{D}'}(\bm{\omega}_t)$,
    $\tilde{e}(\bm{\omega}_t)$ can be decomposed as
	\begin{align}\label{e}
		\tilde{e}(\bm{\omega}_t)=\underbrace{l(\bm{\omega}_t) - \hat{l}_{\mathcal{D'}}(\bm{\omega}_t)}_{\tilde{e}'(\bm{\omega}_t)}+\underbrace{\hat{l}_{\mathcal{D'}}(\bm{\omega}_t)-\hat{l}(\bm{\omega}_t)}_{\tilde{e}''(\bm{\omega}_t)},	
	\end{align}
	where $\tilde e'(\bm{\omega}_t)$ denotes the generalization-related error and $\tilde{e}''(\bm{\omega}_t)$ represents the training-related error. Given a specific training dataset, in the following, we show that $\tilde{e}'$ is caused by the inaccessibility of the broader dataset and reflects limited training samples and client distribution heterogeneity, whereas $\tilde{e}''$ is induced by the intrinsic factors of FL training process, such as the learning rate, algorithmic stability and the number of local epochs.
	
	\subsubsection{Analyze  $\tilde{e}'(\bm{\omega}_t)$} Due to training independence of $\tilde e'(\bm{\omega}_t)$, we rewrite it as $\tilde e' = l(\cdot) - \hat{l}_{\mathcal{D}'}(\cdot)$. By introducing the semi-empirical loss $\overline{l}(\cdot)$ in Definition \ref{def_semi}, $\tilde e'$ can be divided into
	\begin{align}
		\tilde e' = \underbrace{l(\cdot) - \overline{l}(\cdot)}_{\text{A3}} + \underbrace{\overline{l}(\cdot) - \hat{l}_{\mathcal{D}'}(\cdot)}_{\text{A4}},
	\end{align}
	where $A3$ denotes the distribution error and $A4$ the sample error. The distribution error measures the gap between the population and semi-empirical losses induced by the divergence between local and global distribution across clients. The sample error captures the difference between the semi-empirical and empirical losses due to the finite local data samples.
	
	Let us now analyze $A3$. By the definition $\tilde{e}_k=\hat{l}_k'(\cdot)-l_k(\cdot)$ in Lemma \ref{lemma1},
	\begin{align}
		&\mathbb{E}_{\zeta\sim p_1,\dots,\zeta\sim p_K}\!\left[
		\exp\!\left(\sum_{k=1}^K \frac{d_k}{d}\,\tilde e_k \right)
		\right]
		\overset{(a)}{=}
		\prod_{k=1}^{K}
		\mathbb{E}_{\zeta \sim p_k}\!\left[
		\exp\!\left(\frac{d_k}{d}\,\tilde e_k\right)
		\right] \nonumber\\[6pt]
		&\overset{(b)}{=}
		\prod_{k=1}^{K}
		\mathbb{E}_{\zeta \sim p_k}\!\left[
		\exp\!\left(\frac{1}{d}
		\sum_{\zeta \in \mathcal{D}'_k} \big(\ell(\cdot;\zeta)- l_k(\cdot)\big)
		\right)
		\right] \nonumber\\[6pt]
		&\overset{(c)}{=}
		\prod_{k=1}^{K}
		\prod_{\zeta \in \mathcal{D}'_k}
		\mathbb{E}_{\zeta \sim p_k}
		\!\left[
		\exp\!\left(
		\frac{1}{d}\big(\ell(\cdot;\zeta)-l_k(\cdot)\big)
		\right)
		\right] \nonumber\\[6pt]
		&\overset{(d)}{\le}
		\prod_{k=1}^{K}
		\left(
		\exp\!\left(\frac{c^2}{8 d^2}\right)
		\right)^{d_k}
		= 
		\exp\!\left(
		\frac{c^2}{8 d^2}\sum_{k=1}^K d_k
		\right)
		=
		\exp\!\left(\frac{c^2}{8 d}\right).
		\label{8n}
	\end{align}
	where $(a)$ and $(c)$ follow from Assumption \ref{asm:iid-clients}, $(b)$ follows from $\hat{l}'_k(\cdot) = \sum_{\zeta \in \mathcal{D}'_k}\frac{1}{d_k}\ell(\cdot,\zeta)$, while $(d)$ follows from Hoeffding's Lemma  \cite{hoeffding1963probability}, and the fact that $\mathbb{E}[\hat{l}'_k(\cdot)]=l_k(\cdot)$.
	%
	
	By summing up both sides over all $k \in \mathcal{K}$ in (\ref{DV3}), we obtain
	\begin{align}\label{lemma1.1}
		&\sum_{k=1}^K\frac{d_k}{d}\mathbb{E}_{\zeta\sim p_g}[\tilde e_k] - \sum_{k=1}^KD_{kl}(p_g \| p_k)\nonumber\\
		\leq&\log\mathbb{E}_{\zeta\sim p_1, \dots, \zeta\sim p_K}\left[\exp\left(\sum_{k=1}^K{\frac{d_k\tilde e_k}{d}}\right)\right].   
	\end{align}
    Noting that for all $k \in \mathcal{K}$, when $p_k$ coincides with $p_g$, the expected empirical loss satisfies $\mathbb{E}_{\zeta\sim {p_g}}[\hat{l}'_k(\cdot)] = l(\cdot)$, which implies that $\mathbb{E}_{\zeta\sim p_g}[\tilde e_k]=l(\cdot)-\overline{l}_k(\cdot)$. Since $\sum_{k=1}^K\frac{d_k}{d}\overline{l}_k(\cdot) = \overline{l}(\cdot)$, (\ref{lemma1.1}) then yields
	\begin{align}\label{logeee}
		&l(\cdot) - \overline{l}(\cdot) - 
		\sum_{k=1}^{K} D_{kl}(p_g \| p_k)\nonumber\\
		&\leq\log\mathbb{E}_{\zeta\sim p_1, \dots, \zeta\sim p_K}\left[\exp\left({\sum_{k=1}^{K}\frac{d_k\tilde e_k}{d}}\right)\right] .
	\end{align}
	Substituting (\ref{8n}) into (\ref{logeee}) leads to
	\begin{align} \label{di}
		l(\cdot) - \overline{l}(\cdot) \leq {\frac{c^2}{8d}} + \sum_{k=1}^{K} D_{kl}(p_g \| p_k).
	\end{align}
	%
	
	Let us analyze $A4$.
	By applying Hoeffding's Lemma to each $k \in \mathcal{K}$, for any $\delta\in(0,1)$,
	with probability taken over the randomness of sampling
	in $\mathcal{D}'$, at least \(1 - \delta\), $A4$ is bounded as
	\begin{align} \label{em}
		\overline{l}(\cdot) - \hat{l}_{\mathcal{D}'}(\cdot)
		&= \sum_{k=1}^K \frac{d_k}{d}\bigl(l_k(\cdot)-\hat{l}'_k(\cdot)\bigr)\nonumber\\
		&\leq \sum_{k=1}^K \frac{d_k}{d}
		\sqrt{\frac{c^2}{2d_k}\log\frac{2}{\delta}}\nonumber\\
		&= \sqrt{\frac{c^2}{2}\log \frac{2}{\delta}}\,
		\sum_{k=1}^K\frac{\sqrt{d_k}}{d}.
	\end{align}

	Combining (\ref{di}) and (\ref{em}), the upper bound of $\tilde e'$ becomes
	\begin{align}\label{e1}
		\tilde e' \leq \sigma_d^2 \sqrt{\frac{c^2}{2} \log \frac{2}{\delta}} + \sum_{k=1}^{K} D_{kl}(p_g \| p_k) + {\frac{c^2}{8d}}.
	\end{align}
	which holds with probability at least $1-\delta$.

	\subsubsection{Analyze $\tilde{e}''(\bm{\omega}_t)$} 
	In round $t$, we first consider the case where the replacement of $\zeta^{(i)} \in \mathcal{D}$ induces a perturbation in model $\bm{\omega}_t$, which in turn leads to a change in the empirical loss difference $\hat{l}_{\mathcal{D}'}(\bm{\omega}_t)-\hat{l}(\bm{\omega}_t)$. We define $\Delta \ell ^{(i,j)}_t = \ell(\bm{\omega}_t^{(i)}; \zeta^{(j)})-\ell(\bm{\omega}_t; \zeta^{(j)})$ and $\Delta \ell^{(i,i)}_t = \ell(\bm{\omega}_t^{(i)}; \zeta^{(i)})-\ell(\bm{\omega}_t; \hat{\zeta}^{(i)})$. Under Assumptions \ref{bounded_loss} and \ref{ass:stable}, 
    we have $\Delta \ell^{(i,j)}_t \leq \beta$ and $\Delta \ell^{(i,i)}_t \leq c$. Hence, it follows that,
	\begin{align}
		&(\hat{l}_{\mathcal{D}'}(\bm{\omega}_t)-\hat{l}(\bm{\omega}_t))-(\hat{l}_{\mathcal{D}'}(\bm{\omega}_t^{(i)})-\hat{l}_{\mathcal{D}^{(i)}}(\bm{\omega}_t^{(i)}))\nonumber\\
		&\leq|\hat{l}_{\mathcal{D}'}(\bm{\omega}_t)-\hat{l}_{\mathcal{D}'}(\bm{\omega}_t^{(i)})|+|\hat{l}_{\mathcal{D}^{(i)}}(\bm{\omega}_t^{(i)})-\hat{l}(\bm{\omega}_t)|\nonumber\\
		&\leq\beta+\frac{1}{d}|\sum_{j=1, j\neq i}^d\Delta \ell ^{(i,j)}_t+\Delta \ell^{(i,i)}_t|\nonumber\\
		&\leq\beta+\frac{1}{d}|\sum_{j=1, j\neq i}^d\Delta \ell ^{(i,j)}_t|+\frac{1}{d}|\Delta \ell^{(i,i)}_t|\nonumber\\
		&\leq \beta+\frac{d-1}{d}\beta+\frac{c}{d}\leq 2\beta+\frac{c}{d}				
	\end{align}
	We next consider the replacement of $\zeta^{'(i)} \in \mathcal{D}'$ by $\hat{\zeta}^{(i)'} \in \mathcal{Z}$, which induces no perturbation in $\bm{\omega}_t$, and denote the resulting dataset by $\mathcal{D}^{'(i)}$. 
	Define $\Delta \ell^{(i)}_t=\ell(\bm{\omega}_t;\zeta^{'(i)})-\ell(\bm{\omega}_t;\hat{\zeta}^{'(i)})$. Assumption \ref{bounded_loss} implies $\Delta \ell^{(i)}_t \leq c $, and hence
	\begin{align}
		&\hat{l}_{\mathcal{D}'}(\bm{\omega}_t)-\hat{l}(\bm{\omega}_t)
		-(\hat{l}_{\mathcal{D}'^{(i)}}(\bm{\omega}_t)-\hat{l}(\bm{\omega}_t))\nonumber\\
		&=\hat{l}_{\mathcal{D}'}(\bm{\omega}_t)-\hat{l}_{\mathcal{D}'^{(i)}}(\bm{\omega}_t)\nonumber\\
		&= \frac{1}{d}\sum_{j=1, j\neq i}^d
		\big(\ell(\bm{\omega}_t;\zeta^{'(j)})-\ell(\bm{\omega}_t;\zeta'^{(j)})\big)+\frac{1}{d}\Delta \ell^{(i)}_t\nonumber\\
		&\le \frac{c}{d}.
	\end{align}
	%


	Note that for all $\zeta \in \mathcal{D}\cup\mathcal{D'}$, there exists a constant $2\beta+c/d = \max(2\beta+c/d, c/d)$ such that McDiarmid's inequality holds \cite{bousquet2002stability}. Thus, for any $\delta\in(0,1)$ over the random draws of the $\mathcal{D}$ and $\mathcal{D}'$, with probability at least $1-\delta$, we have
	\begin{align}\label{e2}
		&\hat{l}_{\mathcal{D}'}(\bm{\omega}_t)-\hat{l}(\bm{\omega}_t)\leq(2\beta+\frac{c}{d})\sqrt{d\ln\frac{1}{\delta}}\nonumber\\
		&+\sum_{t'=0}^{t-1}4(E-1)\left(\frac{2\eta_{t'}^2\lambda L}{E}\Delta_{t'}+\frac{\eta^2_{t'}\sigma^2L}{E^2}\right).	
	\end{align}
	To substitute (\ref{e1}) and (\ref{e2}) into (\ref{e}), we conclude the proof.
	
\end{proof}
\begin{remark}
	The first term on the RHS of (\ref{gen}) captures the accumulated loss estimation drift introduced by performing the BGD algorithm. $\sigma_d^2$ in the second term arises from the variation caused by imbalanced sample sizes across clients. Note that $\sigma_d^2$ is minimized when one client holds all datasets ($d_k=d, d_{k'}=0, \forall k'\neq k$), which corresponds to the centralized training. $D_{kl}(p_g \| p_k)$ in the third term characterizes the degree of data heterogeneity across each client by measuring the divergence between the global distribution and each local distribution. The fourth term $c^2/8d$ reflects that large total sample sizes can effectively reduce generalization error. The last term $C_\beta$ indicates that algorithmic stability produces better generalization. Specifically, when the FL algorithm exhibits high uniform stability, reducing the distribution divergence $D_{kl}(p_g \| p_k)$ for each client and enlarging the total dataset size $d$ both contribute to tightening the generalization error bound.
\end{remark}

\section {Problem Formulation and Solutions} \label{problem}
The analysis in Theorem \ref{generalization} indicates that the distribution divergence fundamentally influences the generalization performance of FL. 
Meanwhile, the frequent model exchanges and local model update 
introduces significant learning latency and energy consumption of FL. 
To jointly address these issues, a distribution divergence aware client selection and resource allocation method is designed to ensure model accuracy while optimizing the training costs. 

\subsection{Problem Formulation}

In each round $t$, let \(a_{k,t} = 1\) denote that client \(k\) is selected to participate FL; \(a_{k,t} = 0\), otherwise.  
After the local model from all selected clients are collected, the server starts to update the global model. The learning latency of FL execution in round $t$ is given by
\begin{align}
	{\mathcal{T}_t} = \max_{k}a_{k,t}{({\mathcal{T}}_{k,t}^{up} + {\mathcal{T}}_{k,t}^{com})}, \label{12}
\end{align}
and the energy consumption is
\begin{align}
	{\mathcal{E}_t}=\sum_{k=1}^{K} a_{k,t} ({\mathcal{E}}_{k,t}^{up} + {\mathcal{E}}_{k,t}^{com}).
\end{align}

To jointly optimize learning latency and energy consumption during FL, we introduce a multi-objective optimization utility function as in \cite{9681911}
\begin{align}
	Y_t = \alpha_1 {\mathcal{T}_t} + \alpha_2 {\mathcal{E}_t},
\end{align}
where \(\alpha_1, \alpha_2\) control the Pareto-optimal trade-offs between the learning latency and energy consumption.
Let client selection, bandwidth allocation, and computation capacity allocation decisions be represented as $\mathcal{A}_t = \{a_{1,t}, a_{2,t},..., a_{K,t}\}, \mathcal{B}_t = \{b_{1,t}, b_{2,t},...,b_{K,t}\}, \mathcal{F}_t = \{f_{1,t}, f_{2,t},...,f_{K,t}\}$, respectively. The optimization problem can be formulated as

\begin{subequations}
	\vspace{-10pt}
	\begin{align}
		\textbf{P1}~\min_{\mathcal{A}_t,\mathcal{B}_t,\mathcal{F}_t} Y_t,
	\end{align} 
	\begin{align}\label{c11}
		s.t. ~a_{k,t} \in \{0,1\}, \forall k \in \mathcal{K},
	\end{align}
	\begin{align}\label{c21}
		\sum_{k=1}^{K}b_{k,t}\leq 1,
	\end{align}
	\begin{align}\label{c31}
		0\leq f_{k,t}\leq f_k^{max}, \forall k \in \mathcal{K},
	\end{align}
	\begin{align}\label{c41}
		a_{k,t}D_{kl}(p_g \| p_k) \leq e_1^{max}, \forall k \in \mathcal{K},
	\end{align}
	\begin{align}\label{c51}
		{\sum_{k=1}^{K}a_{k,t}d_k} \geq e_2^{max}.
	\end{align}
\end{subequations}
(\ref{c11}) indicates whether client $k$ participates in round $t$. (\ref{c21}) denotes that the total allocated bandwidth cannot exceed the system limit. (\ref{c31}) enforces client $k$ performs local model update within its computation capability $f_k^{max}$. Building upon Theorem \ref{generalization}, constraints (\ref{c41}) and (\ref{c51}) are introduced to ensure the model performance. Constraint (\ref{c41}) restricts the distribution divergence of the selected clients to be below the KL threshold $e_1^{max}$, while (\ref{c51}) ensures that the selected clients possess sufficient dataset size beyond the data size budget $e_2^{max}$.

\subsection{Problem Solutions}
To solve $\textbf{P1}$, the global data distribution $p_g$ and the local distributions
$\{p_1,\ldots,p_K\}$ in constraint ($\ref{c41}$) need to be obtained.
Thus, each client first estimates its local distribution based on data category proportions. Suppose there are $Z$ categories. For client $k$, let $d_{z,k}$ denote the number of samples belonging to category $z$. The proportion of category $z$ is given by
\begin{align}
	\hat{p}_{z,k} = \frac{d_{z,k}}{d_k}.
\end{align}
Then, the estimated data distribution vector of client $k$ is
\begin{align}
	\hat{p}_k=[\hat{p}_{1,k}, \hat{p}_{2,k},...,\hat{p}_{Z,k}].
\end{align} 
Each client uploads its estimated local distribution $\hat{p}_k$ to the server. Recalling that $p_g = \mathbb{E}_{p \sim \mathcal{P}}[p]$, the server estimates the global distribution as
\begin{align}
	\hat{p_g}=\sum_{k=1}^K\frac{d_k}{d}\hat{p}_k.
\end{align} 
Based on constraint (\ref{c41}), the server selects a subset of clients whose distribution satisfy $D_{kl}(\hat{p_g} \| \hat{p}_k) \leq e_1^{max}, \forall k \in \mathcal{K}$. The resulting client set is denoted by $\mathcal{K}'$, and we set $a_{k,t}=0, \forall k \in \mathcal{K}\setminus\mathcal{K}'$. 
Thus, problem $\textbf{P1}$ can be transformed into
	
\begin{subequations}
	\begin{align}	  \textbf{P1}'~\min_{\mathcal{A}_t,\mathcal{B}_t,\mathcal{F}_t} Y_t,
	\end{align}
    \begin{align}
		s.t.(\ref{c11}),(\ref{c21}),(\ref{c31}),(\ref{c51}).
	\end{align} 	
\end{subequations}
\noindent $\textbf{P1}'$ is an MINLP problem \cite{9681911}. Due to its complexity, we introduce an auxiliary variable $\Upsilon_t$ and reformulate $\textbf{P1}$ in epigraph form \cite{vehicle_selection_fl}. Specifically, $\Upsilon_t$ represents an upper bound on $\alpha_1a_{k,t}{({\mathcal{T}}_{k,t}^{up} + {\mathcal{T}}_{k,t}^{com})}$ as expressed in (\ref{zeta2}). Accordingly, $\textbf{P1}'$ can be reformulated as
\begin{subequations}
	\begin{align}
		\textbf{P2}~\min_{\mathcal{A}_t,\mathcal{B}_t,\mathcal{F}_t,\Upsilon_t} &\Upsilon_t+\alpha_2 \sum_{k=1}^{K} a_{k,t} (\frac{P_kC_{k}}{b_{k,t}R_{k,t}} + d_k\epsilon s_k f_{k,t}^{2}E),\nonumber
	\end{align} 
	\begin{align}
		s.t.~(\ref{c11}),(\ref{c21}),(\ref{c31}),(\ref{c51}),\nonumber
	\end{align}
	\begin{align}\label{zeta2}
		\alpha_1 {\frac{a_{k,t}C_{k}}{b_{k,t}R_{k,t}}}+
		\alpha_1{ \frac{Ea_{k,t}s_kd_{k}}{f_{k,t}}}\leq\Upsilon_t, \forall k \in \mathcal{K}',
	\end{align}
\end{subequations}
where $R_{k,t}=Blog_2(1+{\theta_{k,t}|h_{k,t}|^2P_{k}}/{BN_0})$.

To solve $\textbf{P2}$, we develop a two-step joint client selection and resource allocation algorithm. Specifically, (a) we fix $\mathcal{F}_t$ and update $\mathcal{A}_t$ and $\mathcal{B}_t$, where the corresponding sub-problem can be reformulated as a difference-of-convex (DC) problem; (b) we update $\mathcal{F}_t$ while keeping $\mathcal{A}_t$ and $\mathcal{B}_t$ fixed, resulting in a convex problem.

Given a fixed $\mathcal{F}_t$, the sub-problem in step (a) is formulated as
\begin{subequations}
	\begin{align}
		\textbf{P3}~\min_{\mathcal{A}_t,\mathcal{B}_t,\Upsilon_t} &\Upsilon_t+\alpha_2 \sum_{k=1}^{K} a_{k,t} (\frac{P_kC_{k}}{b_{k,t}R_{k,t}} + d_k\epsilon s_k f_{k,t}^2E),\nonumber
	\end{align} 
	\begin{align}
		s.t.~(\ref{c11}),(\ref{c21}),(\ref{c51}),\nonumber
	\end{align}
	\begin{align}\label{zeta3}
		\alpha_1 {\frac{a_{k,t}C_{k}}{b_{k,t}R_{k,t}}}+
		\alpha_1{ \frac{Ea_{k,t}s_kd_{k}}{f_{k,t}}}\leq\Upsilon_t, \forall k \in \mathcal{K}'.
	\end{align}
\end{subequations}
The  product term ${a_{k,t}}/{b_{k,t}}$ renders $\textbf{P3}$ non-convex. To reformulate $\textbf{P3}$ into a more tractable form, we introduce auxiliary variables $z_{k,t}$, $u_{k,t}$.
Specifically, let
\begin{align}
	z_{k,t}=\frac{1}{b_{k,t}}, \qquad \forall k \in \mathcal{K}',
\end{align}
and define
\begin{align}\label{uk}
	u_{k,t}=a_{k,t} z_{k,t}, \qquad \forall k \in \mathcal{K}'.
\end{align}
Then, constraints (\ref{c21}) and (\ref{zeta3}) can be rewritten as
\begin{align}\label{z}
	\sum_{k=1}^{K} \frac{1}{z_{k,t}} \leq 1,
\end{align}
and
\begin{align}\label{zetaz}
	\alpha_1 \frac{C_{k} u_{k,t}}{R_{k,t}}
	+
	\alpha_1 \frac{a_{k,t} E s_k d_{k}}{f_{k,t}}
	\leq \Upsilon_t,
	\qquad \forall k \in \mathcal{K}'.
\end{align}
Further, by utilizing the reformulation-linearization technique (RLT) \cite{8607120}, (\ref{uk}) is equivalent to
the following constraints.
\begin{equation}\label{u}
	u_{k,t} =
	\begin{cases}
		u_{k,t} \geq a_{k,t}, & k \in \mathcal{K}'; \\
		u_{k,t} \leq \frac{a_{k,t}}{b^{min}}, & k \in \mathcal{K}'; \\
		u_{k,t} \leq z_{k,t}+a_{k,t}-1, & k \in \mathcal{K}';\\
		u_{k,t} \geq z_{k,t}-\frac{(1-a_{k,t})}{b^{min}}, & k \in \mathcal{K}',
	\end{cases}
\end{equation}
where $b^{min}$ is an infinitely small constant to ensure that the variable \( z_{k,t} \) is bounded, enabling effective linear relaxations can be constructed for \( u_{k,t} \).
%
When $a_{k,t} = 0$, the first two inequalities imply $0\leq u_{k,t} \leq 0$, which gives $u_{k,t}=0$. Since $1 \leq z_{k,t}\leq1/b^{min}$, the last two inequalities are trivially satisfied.  When $a_{k,t} = 1$, the last two inequalities yield $z_{k,t} \leq u_{k,t} \leq z_{k,t}$, which indicates $u_{k,t}=z_{k,t}$, while the first two inequalities hold automatically. Therefore, $u_{k,t}$ is uniquely determined by constraints (\ref{u}).
Accordingly, $\textbf{P3}$ can be transformed into
\begin{subequations}
	\begin{align}
		\textbf{P4}~\min_{\mathcal{A}_t,\mathcal{Z}_t,\mathcal{U}_t, \Upsilon_t} &\Upsilon_t+\alpha_2 \sum_{k=1}^{K} (\frac{P_kC_{k}u_{k,t}}{R_{k,t}} + a_{k,t}d_k\epsilon s_k f_{k,t}^2E),\nonumber
	\end{align} 
	\begin{align}
		s.t.~(\ref{c11}),(\ref{c51}),(\ref{z}),(\ref{zetaz}),(\ref{u}).\nonumber
	\end{align}
\end{subequations}
Since $a_{k,t}$ is a binary variable in (\ref{c11}), $\textbf{P4}$ remains a mixed-integer problem. Therefore, we express constraint (\ref{c11}) as the intersection of the following constraints,
\begin{align}\label{asum}
	\sum_{k=1}^{K}(a_{k,t}-a_{k,t}^2)\leq0,
\end{align}
\begin{align}\label{ap}
	0\leq a_{k,t}\leq1, k\in\mathcal{K}'.
\end{align}
In this way, $\textbf{P4}$ can be reformulated into
\begin{subequations}
	\begin{align}
		\textbf{P5}~\min_{\mathcal{A}_t,\mathcal{Z}_t,\mathcal{U}_t,\Upsilon_t} &\Upsilon_t+\alpha_2 \sum_{k=1}^{K} (\frac{P_kC_{k}u_{k,t}}{R_{k,t}} + a_{k,t}d_k\epsilon s_k f_{k,t}^2E),\nonumber
	\end{align} 
	\begin{align}
		s.t.~(\ref{c51}),(\ref{z}),(\ref{zetaz}),(\ref{u}),(\ref{asum}), (\ref{ap}).\nonumber
	\end{align}
\end{subequations}
The constraint in (\ref{asum}) can be expressed as the difference of two convex functions. Since the remaining constraints and the objective function in $\textbf{P5}$ are convex, we convert $\textbf{P5}$ into $\textbf{P6}$ by introducing a penalty factor $\rho$, given by
\begin{subequations}
	\begin{align}
		\textbf{P6}~\min_{\mathcal{A}_t,\mathcal{Z}_t, \mathcal{U}_t,\Upsilon_t} &\Upsilon_t +\alpha_2 \sum_{k=1}^{K} (\frac{P_kC_{k}u_{k,t}}{R_{k,t}} + a_{k,t}d_k\epsilon s_k f_{k,t}^2E)+\nonumber\\
		&\rho\sum_{k=1}^{K}(a_{k,t}-a_{k,t}^2),\nonumber
	\end{align}
	\begin{align}
		s.t.~(\ref{c51}),(\ref{z}),(\ref{zetaz}),(\ref{u}),(\ref{ap}).\nonumber
	\end{align}
\end{subequations}
It has been demonstrated that $\textbf{P5}$ is equivalent to $\textbf{P6}$ when $\rho$ is sufficiently large \cite{gotoh2018dc}. Moreover, $\textbf{P6}$ can be expressed as a difference of convex functions, given by $\Upsilon_t+\alpha_2\sum_{k=1}^{K}\frac{P_kC_ku_{k,t}}{R_{k,t}}+a_{k,t}d_k\epsilon s_kf_{k,t}^2E+\rho\sum_{k=1}^{K}a_{k,t}$ and $\sum_{k=1}^{K}\rho a_{k,t}^2$. In general, we can solve $\textbf{P6}$ by optimizing the sub-problem $\textbf{P6}[\tau']$ in each iteration $\tau'$ \cite{tono2017efficient},
\begin{subequations}
	\begin{align}
		\textbf{P6}[\tau']~\min_{\mathcal{A}_t,\mathcal{Z}_t,\mathcal{U}_t,\Upsilon_t} &\Upsilon_t +\alpha_2\sum_{k=1}^{K} (\frac{P_kC_{k}u_{k,t}}{R_{k,t}} + d_k\epsilon s_k f_{k,t}^2E)\nonumber\\
		&+\rho\sum_{k=1}^{K}a_{k,t}-\rho\sum_{k=1}^{K}{a_{k,t}^{[\tau'-1]}}^2\nonumber\\
		&-2\rho\sum_{k=1}^{K}a_{k,t}^{[\tau'-1]}(a_{k,t}-a_{k,t}^{[\tau'-1]}),\nonumber
	\end{align}
	\begin{align}
		s.t.~(\ref{c51}),(\ref{z}),(\ref{zetaz}),(\ref{u}),(\ref{ap}).\nonumber
	\end{align}
\end{subequations}
The whole method to solve $\textbf{P6}$ using the DC algorithm is summarized in Algorithm \ref{dc}.
\begin{algorithm}[t]
	\caption{DC-based Algorithm to $\textbf{P6}$}
	\label{dc}
	\begin{algorithmic}[1]
		\renewcommand{\algorithmicrequire}{\textbf{Input:}}
		\renewcommand{\algorithmicensure}{\textbf{Output:}}
		\REQUIRE A feasible solution 
		$\{\mathcal{A}_t,\mathcal{Z}_t,\mathcal{U}_t,\Upsilon_t\}$ for $\textbf{P6}$.
		\STATE Set $\tau' \gets 0$
		
		\WHILE{$\tau' \leq \tau_{\max}^{\mathrm{DC}}$}
		\STATE $\tau' \gets \tau' + 1$
		\STATE Solve problem $\textbf{P6}[\tau'$] by interior point method to obtain 
		$\mathcal{A}_t$, $\mathcal{Z}_t$, $\mathcal{U}_t$, $\Upsilon_t$, and set 
		$a_{k,t}^{[\tau']} = a_{k,t}$
		\ENDWHILE
		
		\IF{$a_{k,t} > 0.5$}
		\STATE Set $b_{k,t} = \frac{1}{u_{k,t}}$
		\ENDIF
		
		\ENSURE
		$\mathcal{A}_t$, $\mathcal{B}_t$, $\Upsilon_t$.
		
	\end{algorithmic}
\end{algorithm}
\begin{algorithm}[t]
	\renewcommand{\algorithmicrequire}{\textbf{Input:}}  
	\renewcommand{\algorithmicensure}{\textbf{Output:}}           
	\caption{Client Selection and Resource Allocation (CSRA) Algorithm}
	\label{joint}
	\begin{algorithmic}[1]
		
		\REQUIRE
		$\alpha_1$, $\alpha_2$, $B$, $\{f_{k,t}^{\max}\}$, $e_1^{\max}$, $e_2^{\max}$, $\{\mathcal{D}_k\}$

		\STATE Select the clients satisfies $D_{kl}(\hat{p_g} \| \hat{p}_k) \leq e_1^{max}, \forall k \in \mathcal{K}$.
		\STATE Employ \textbf{Algorithm}~\ref{dc} to obtain $\mathcal{A}_t$, $\mathcal{B}_t$, and $\Upsilon_t$.
		
		\STATE Set $\tau' \gets 0$.
		\WHILE {$\tau' \leq \tau_{max}^{SG}$}
		\STATE Substitute the dual variables $\gamma_k^{[\tau']}$ and $\beta_k^{[\tau']}$ into (\ref{fopt}) to obtain $f_{k,t}$.
		\STATE Update the dual variables $\gamma_k^{[\tau'+1]}$ and $\beta_k^{[\tau'+1]}$ by (\ref{dualupsilon}) and (\ref{dual_iota}).
		\STATE $\tau' \gets \tau' + 1$.
		\ENDWHILE

		\ENSURE
		$\mathcal{A}_t$, $\mathcal{B}_t$, $\mathcal{F}_t$
		
	\end{algorithmic}
\end{algorithm}

Given the obtained values of $\mathcal{A}_t$, $\mathcal{B}_t$, $\Upsilon_t$ from $\textbf{P6}$, we further optimize the computation capacity. By substituting $\Upsilon_t$ into $\Upsilon_t'$ in $\textbf{P1}$, the objective function in step (b) is given by 
\begin{subequations}
	\begin{align}
		\textbf{P7}~\min_{\mathcal{F}_t,\Upsilon_t'} &~ \Upsilon_t'+\alpha_2 \sum_{k=1}^{K} (\frac{P_kC_{k}a_{k,t}}{b_{k,t}R_{k,t}} + a_{k,t}d_k\epsilon s_k f_{k,t}^2E),\nonumber
	\end{align} 
	\begin{align}
		s.t.~(\ref{c31}),\nonumber
	\end{align}
	\begin{align}\label{zeta31}
		\alpha_1 {\frac{a_{k,t}C_{k}}{b_{k,t}R_{k,t}}}+
		\alpha_1{ \frac{Ea_{k,t}s_kd_{k}}{f_{k,t}}}\leq\Upsilon_t', \forall k \in \mathcal{K}'.
	\end{align}
\end{subequations}
$\textbf{P7}$ contains a quadratic objective function with convex constraints.  
By introducing non-negative dual variables $\boldsymbol{\gamma}$ and $\boldsymbol{\beta}$, we can derive the Lagrangian function $\mathcal{L}(\mathcal{F}_t, \boldsymbol{\gamma}, \boldsymbol{\beta})$ of $\textbf{P7}$ as
\begin{align}
	&\mathcal{L}(\mathcal{F}_t, \boldsymbol{\gamma}, \boldsymbol{\beta})\nonumber\\ 
    &= \Upsilon_t'+\alpha_2 \sum_{k=1}^{K} a_{k,t} (\frac{P_kC_{k}}{b_{k,t}R_{k,t}} + d_k\epsilon s_k f_{k,t}^2E)+
	\sum_{k=1}^{K}\gamma_k(f_{k,t}\nonumber\\
    &-f_{k,t}^{max})
    +\sum_{k=1}^{K}\beta_k(\alpha_1 {\frac{a_{k,t}C_{k}}{b_{k,t}R_{k,t}}}+\alpha_1{ \frac{Ea_{k,t}s_kd_{k}}{f_{k,t}}}-\Upsilon'_t).
\end{align}
Due to the convexity of $\textbf{P7}$, the optimal solution satisfies the Karush
Kuhn-Tucker (KKT) conditions.
By taking the first order derivatives of $f_{k,t}$, we obtain
\begin{align} \label{fk}
	\frac{\partial \mathcal{L}(\mathcal{F}, \boldsymbol{\gamma}, \boldsymbol{\beta})}{\partial f_{k,t}}=2\alpha_2d_k\epsilon s_kf_{k,t}E+\gamma_k-\frac{E\beta_k\alpha_1s_kd_k}{f_{k,t}^2}.
\end{align}
By setting (\ref{fk}) to zero, we obtain the optimal solution of $f_{k,t}$ with respect to $\gamma_k$ and $\beta_k$ as
\begin{equation}
	\label{fopt}
	f_{k,t}^* =
	\begin{cases}
		(-q-q'[\sqrt[3]{X_1}+\sqrt[3]{X_2}])^{\frac{1}{3}}, & \Delta\textgreater0, [\sqrt[3]{X_1}+\sqrt[3]{X_2}]\textless\frac{3}{2};\\
		\frac{\gamma_k}{6\alpha_2d_k\epsilon s_k E}, & \Delta = 0; \\
		A\cos\frac{\arccos C}{3}-G, & \Delta <0; \\
		\text{no solution}, & \text{otherwise,}
	\end{cases}
\end{equation}
where $\Delta=\frac{(\beta\alpha_1)^2}{16(\alpha_2\epsilon)^2}-\frac{\gamma_k^3\beta_k\alpha_1}{432(\alpha_2\epsilon)(\alpha_2d_k\epsilon s_kE)^3}$, $q=\frac{2\gamma_k^3-108(\alpha_2d_k\epsilon s_kE)^2(E\beta_k\alpha_1s_kd_k)}{216(\alpha_2d_k\epsilon s_kE)^3}$, $q'=\frac{\gamma_k^3}{162(\alpha_2d_k\epsilon s_kE)^3}-\frac{\beta_k\alpha_1}{3\alpha_2\epsilon}$,  $X_1=\frac{\beta_k\alpha_1}{2\alpha_2\epsilon}-\frac{\gamma_k^3}{108(\alpha_2d_k\epsilon s_kE)^3}+\sqrt{\Delta}$, $X_2=\frac{\beta_k\alpha_1}{2\alpha_2\epsilon}-\frac{\gamma_k^3}{108(\alpha_2d_k\epsilon s_kE)^3}-\sqrt{\Delta}$, $A=\frac{\gamma_k}{3\alpha_2d_k\epsilon s_kE}$, $G=\frac{\gamma_k}{6\alpha_2d_k\sigma s_k E}$, $C=\frac{54\beta_k\alpha_1\alpha_2^2\sigma^2d_k^3s_k^3E^3}{\gamma_k^3}-1$.
Since $\textbf{P7}$ is a convex optimization problem and satisfies the Slater condition, the duality gap between the primal and dual problem is zero. Thus, the dual sub-gradient method can be employed to compute the dual variables $\gamma_k$, $\beta_k$ in a distributed manner \cite{9149343}. The updates of $\gamma_k$, $\beta_k$ at the $\tau'+1$-th iteration are given by
\begin{align}\label{dualupsilon}
	\gamma_k^{[\tau'+1]}=[ \gamma_k^{[\tau']} +\upsilon_k^{[\tau']}(f_{k,t}^*-f_{k,t}^{max})]^+,
\end{align}
\begin{align}\label{dual_iota}
	\beta_k^{[\tau'+1]}=[ \beta_k^{[\tau']} +\iota_k^{[\tau']}(\alpha_1{ \frac{Es_kd_{k}}{f_{k,t}^*}}-\Upsilon_t')]^+,
\end{align}
where $[x]^+=\max\{0,x\}$, $\upsilon_k^{[\tau']}$ and $\iota_k^{[\tau']}$ denote the updating step sizes of the $\tau'$-th iteration.

Combing the solutions to $\textbf{P6}$ and $\textbf{P7}$, we present the joint client selection and resource allocation method (CSRA) in Algorithm \ref{joint}.
\subsection{Complexity Analysis}
The complexity of CSRA comes from two parts. The first part is the complexity of DC algorithm. There are a total optimization $3K+1$ variables and $6K+2$ convex constraints. The complexity of DC algorithm is $\mathcal{O}((3K+1)^3(6K+2))=\mathcal{O}(K^4)$. The second part of complexity comes from sub-gradient algorithm, we first compute the primal closed-form responses 
$f_{k,t}^*$  with respect to the dual variables obtained in equation (\ref{fopt}), which costs $\mathcal{O}(K)$. Then we update $2K$ dual variables, which also costs $\mathcal{O}(K)$. 
Hence, the complexity of sub-gradient method is $\mathcal{O}((K+2K))=\mathcal{O}(K)$. Since the overall complexity cost is dominated by solving DC algorithm, the total computational complexity of CSRA is $\mathcal{O}(K^4)$.

\section{Simulation and Performance Evaluations} 
\begin{table}[tbp]
	\vspace{-8pt}
	\centering
	\caption{\label{tab:test} Simulation Parameter Setting}
	\setlength\tabcolsep{15pt}{
		\begin{tabular}{lcl}
			\toprule
			Parameter & Value \\
			\midrule
			The number of devices $K$ & 80\\
			Distance from device to server $dis_k$ & U[200,250]m\\
			Power spectrum density of noise $N_0$ & -174dBm/Hz  \\
			Average small-scale channel gain & 1\\
			Total bandwidth $B^{up}$ & 2 MHz  \\
			Transmit power $P^{up}_k$ & [20,33] dBm\\
			Maximum computation capacity $f^{max}_k$ & [2,5] MHz\\
			Capacitance coefficient $\epsilon$ & 10\textsuperscript{-27} \\
			Processing density $s_k$ & [1,10] cycles/bit\\
			Path loss exponent $\gamma$ & 2.7\\
			Carrier frequency $f_c$ & 2.4 GHz\\
			Maximum acceptable energy $E$ & [2,9] J\\
			System parameter $\mu$ & 1.7$\times$10\textsuperscript{-6}\\	
			\bottomrule
	\end{tabular}}\label{parameter}
\end{table}\label{simulation}
In this section, we evaluate the performance of CSRA in terms of test accuracy, learning latency, and energy consumption. The experimental setup and simulation results are provided in the following.

\subsection{Experiment Setup}
The performance of CSRA is evaluated on Fashion-MNIST\cite{xiao2017fashion} and CIFAR-10 \cite{krizhevsky2009learning} datasets.
We adopt a convolutional neural network consisting of two convolutional layers with 6 and 16 channels, respectively, followed by three fully connected layers. ReLU activations are applied after each layer, max-pooling with a stride of 2 is used for downsampling, and a final softmax layer performs classification. The datasets are partitioned using a hybrid data partitioning strategy \cite{zhang2021client}. Specifically, $10\%$ of the clients receive an equal number of samples per category, while the remaining clients follow a Dirichlet-based non-IID split with concentration parameter $0.5$. 
We consider an urban environment with $80$ clients
randomly distributed within the coverage range of a base station equipped with an edge server. The distance between clients and the server ranges from 200 m to 250 m. Both large-scale and small-scale fading are considered in the channel model. 25\% clients are selected for training with the rest for testing. 
The relevant simulation parameters are listed in Table \ref{parameter}.

To demonstrate the effectiveness of CSRA in terms of model performance, we compare it with four baseline methods: (a) the SAE algorithm \cite{9237168}, which selects clients based on instantaneous wireless channel conditions. (b) FedAvg \cite{mcmahan2017communication}, which randomly selects the same number of clients as CSRA in each round for federated learning. (c) Pow \cite{cho2022towards}, which selects clients with the maximum loss. (d) CFL \cite{10192896}, which clusters clients according to gradient similarity and performs FL within each cluster.
We further evaluate the effectiveness of CSRA in terms of training costs by comparing it with the following baseline methods: (a) CS-Random, which selects clients by Algorithm \ref{dc}, while bandwidth and computation capacity are allocated randomly; (b) GA-Random, which selects clients by the Genetic Algorithm (GA), with bandwidth and computation capacity allocated randomly; (c) CS-Greedy, which selects clients by Algorithm \ref{dc} and allocates bandwidth and computation capacity via a greedy algorithm; (d) GA-Greedy, which selects clients by GA and allocates bandwidth and computation capacity by greedy algorithm.
For the GA-based baselines, the client selection is guided by a fitness function defined as
\begin{equation}
	\label{fitness}
	fitness = M(c_1^2+c_2^2),
\end{equation}
where $c_1=\max(0,\sum_{k=1}^{K} (a_{k,t}D_{kl}({\hat{p}_g} \| \hat{p}_k) - e_1^{max}))$, and $c_2=\max(0,\sum_{k=1}^{K}a_{k,t}d_k - {1}/{e_2^{max}})$, where $M$ is a penalty factor.
For the greedy-based baselines, bandwidth and capacity are allocated as follows.
Step (1) calculate the marginal gain of devices by $marginal\_gain(k)=-\alpha_1\frac{a_{k,t}C_k}{b_{k,t}^2R_{k,t}}-\alpha_2\frac{P_kC_k}{b^2_{k,t}R_{k,t}}$. Step (2) Choose a device with the maximum marginal gain to allocate bandwidth $b_k^*=b_k^*+\Delta b$, where $\Delta b$ is the minimum bandwidth allocation unit. Step (3) Update the computation capacity for each client as $f_{k,t}^*=\max{(0,\min(f_k^{max},(\frac{\alpha_1a_{k,t}}{2\alpha_2d_k\sigma s_kE})^{\frac{1}{3}})}$.
\subsection{Model Performance Evaluation}
\subsubsection{Model Performance}
\begin{figure}[tbp]
	\centering
	\subfloat[Accuracy on Fashion-MNIST\label{acc fmnist}]{
		\includegraphics[scale=0.45]{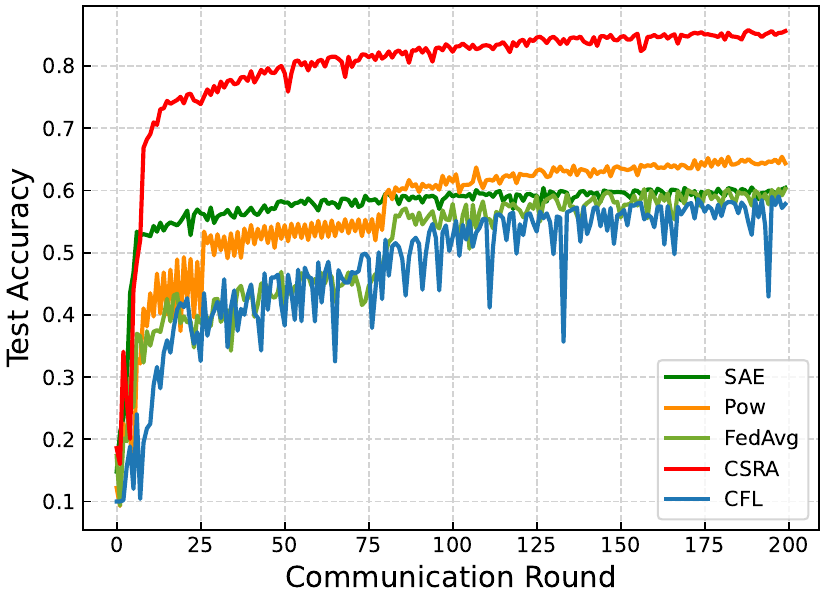}}
	\\
	\subfloat[Accuracy on Cifar-10\label{acc cifar10}]{
		\includegraphics[scale=0.45]{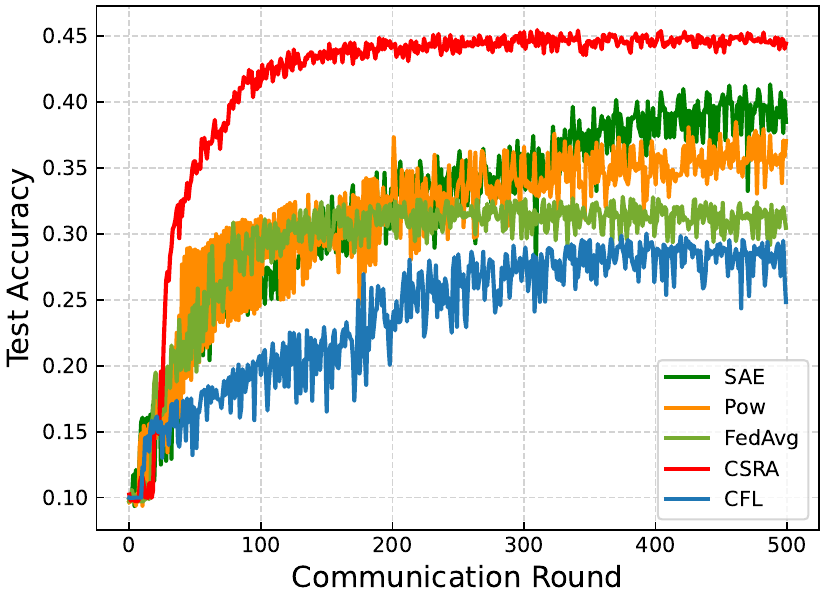}}
	\caption{Comparison of test accuracy between CSLP and other baseline methods on (a) Fashion-MNIST and (b) CIFAR-10.}
	\label{acc}
\end{figure}
To evaluate the performance improvement achieved by CSRA, we conduct experiments on the Fashion MNIST and CIFAR-10 datasets. All methods use the same initial model, with learning rates of 0.005 and 10 local epochs. As illustrated in Figs. \ref{acc fmnist} and \ref{acc cifar10}, CSRA outperforms all other methods in terms of test accuracy. 
The loss-based Pow method provides limited accuracy improvement compared with the naive FedAvg approach. SAE, which selects clients based on channel quality, cannot consistently guarantee accuracy across different datasets. In particular, on the Fashion-MNIST dataset, its performance is comparable to that of FedAvg. 
The CFL method suffers from unstable performance due to frequent changes in cluster assignments. 
The improvement of CSRA is attributed to 
it selects clients whose data distributions align with the global distribution while ensuring a sufficient number of data samples. 

\subsubsection{Effect of Distribution Divergence}
\begin{figure}[tbp]
\vspace{-8pt}
	\centering
	\subfloat[Accuracy on Fashion-MNIST\label{fmnist kl}]{
		\includegraphics[scale=0.45]{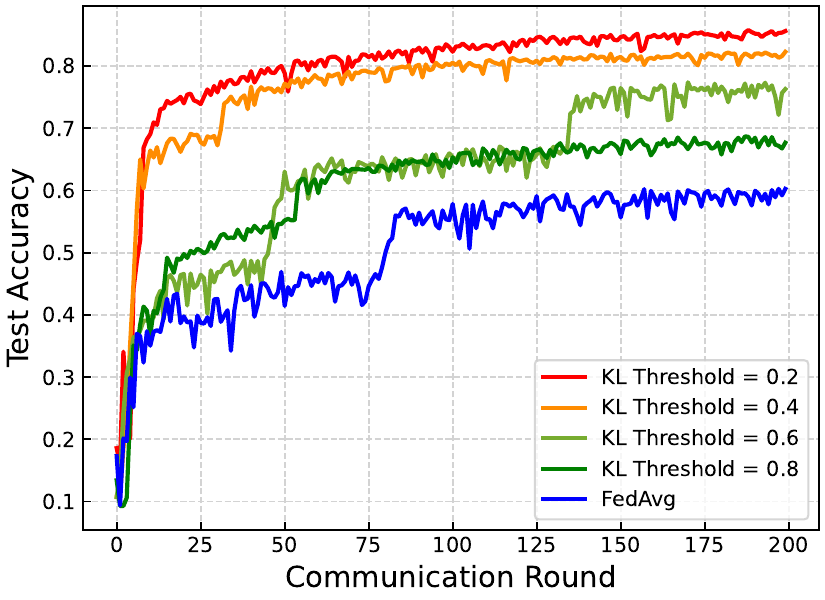}}
	\\
	\subfloat[Accuracy on Cifar-10 \label{cifar10 kl}]{
		\includegraphics[scale=0.45]{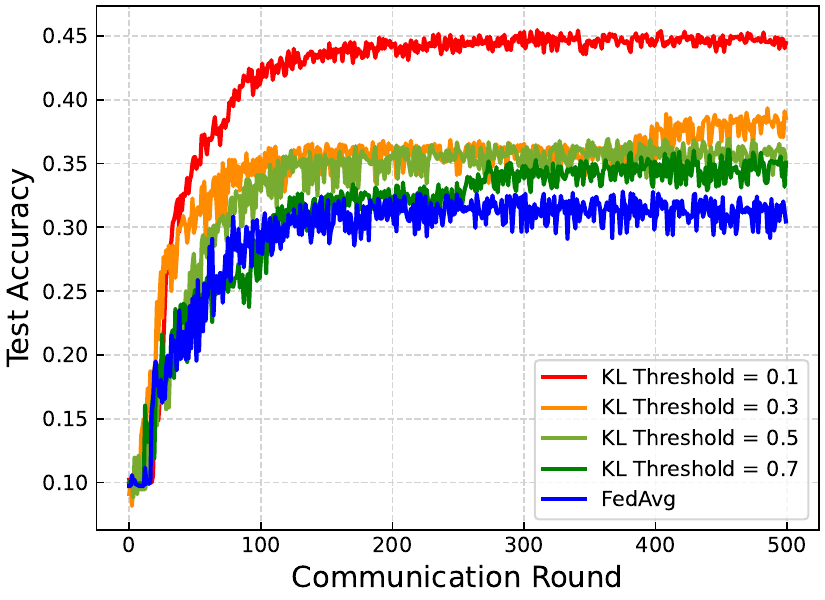}}
	\caption{Effect of the distribution divergence (KL) threshold on test accuracy for (a) Fashion-MNIST and (b) CIFAR-10.}
	\label{kl}
\end{figure}

Figs.~\ref{fmnist kl} and \ref{cifar10 kl} illustrate the test accuracy on the Fashion-MNIST and CIFAR-10 datasets under different KL divergence thresholds $e_1^{max}$, with the data size budget set to $e_2^{max}=2000$. It can be observed that a larger $e_1^{max}$ results in poorer performance. This is because a larger $e_1^{max}$ represents a larger data distribution divergence among clients. When a small $e_1^{max}$ is chosen, CSRA tends to select devices whose local data distributions closely match the global distribution. However, as the constraint on distribution divergence is relaxed, the test accuracy gradually approaches that of the baseline FedAvg with random client selection. 
Hence, enforcing an appropriate divergence threshold remains essential to preserve stable and robust test performance.
\subsubsection{Effect of Data Size Budget}
\begin{figure}[tbp]
	\centering
	\subfloat[Accuracy on Fashion-MNIST\label{fmnist datasize}]{
		\includegraphics[scale=0.45]{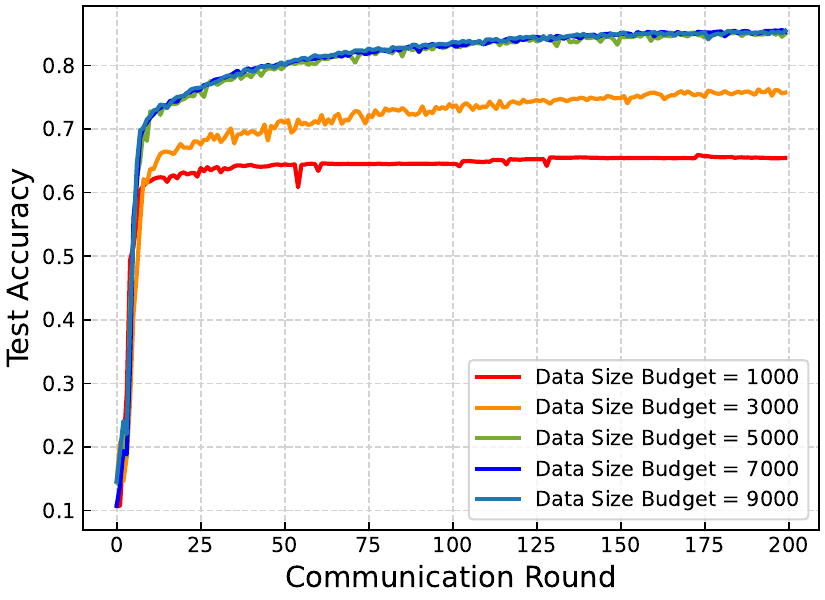}}
	\\
	\subfloat[Accuracy on Cifar-10\label{cifar10 datasize}]{
		\includegraphics[scale=0.45]{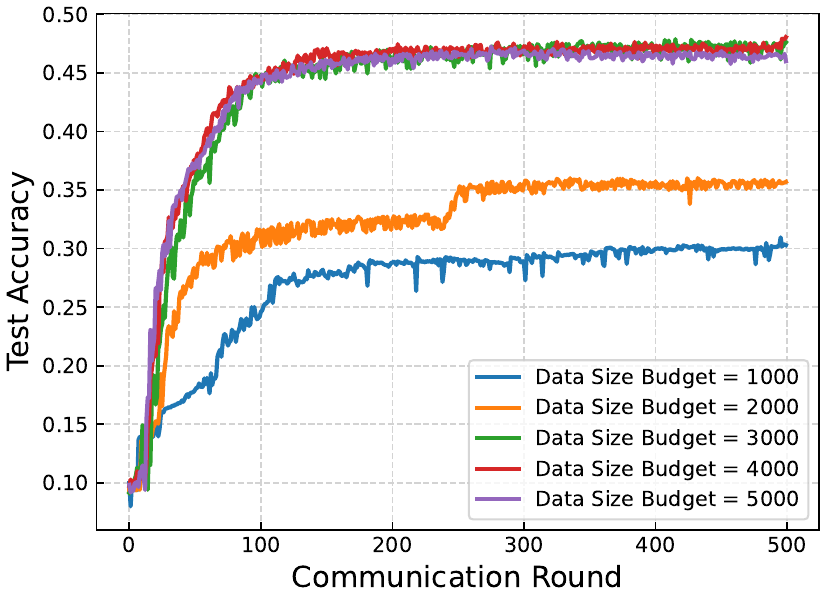}}
	
	\caption{Effect of the data size budget on test accuracy for (a) Fashion-MNIST and (b) CIFAR-10.}
	\label{acc_size}
\end{figure}
%
The test accuracy on Fashion-MNIST and CIFAR-10 under different data size budget $e_2^{max}$ is shown in Figs. \ref{fmnist datasize} and \ref{cifar10 datasize}, with  the KL divergence constraint $e_1^{max}$ fixed at $0.2$. 
Initially, the test accuracy of CSRA improves as the data size budget increases. This improvement is attributed to the larger dataset amount of data used in each training round, which allows the model to learn from a broader range of samples and categories.
However, as the data size budget continues to increase, the improvement in model performance becomes less pronounced. This is because the additional data provides increasingly repetitive knowledge, offering imited further contribution to the model performance.
\subsection{Training Cost Evaluation}
\subsubsection{Learning Latency and Energy Costs}

To compare the energy consumption and learning latency of CSRA with baseline methods, we set $\alpha_1 = \alpha_2 = 1$ for all schemes. Figs. \ref{energy round} and \ref{latency round} show the cumulative energy consumption and learning latency over training rounds, respectively. The Random-based method results in higher accumulated latency, mainly due to the randomness in resource allocation. In contrast, the Greedy-based method can effectively reduce learning latency compared with the Random-based methods. Although Greedy-based methods incur slightly higher energy consumption compared with Random-based methods, this drawback can be mitigated by appropriately tuning $\alpha_1$ and $\alpha_2$. Among all methods, CSRA achieves the lowest energy consumption and latency because it jointly optimizes client selection, bandwidth, and computation capacity allocation rather than simply allocating bandwidth to clients with the highest marginal gain or allocating resources randomly. These results highlight the effectiveness of CSRA in reducing latency and energy consumption through efficient resource allocation and client selection. 

\begin{figure}[tbp]
	\centering
	
	\subfloat[Energy Consumption \label{energy round}]{
		\includegraphics[scale=0.45]{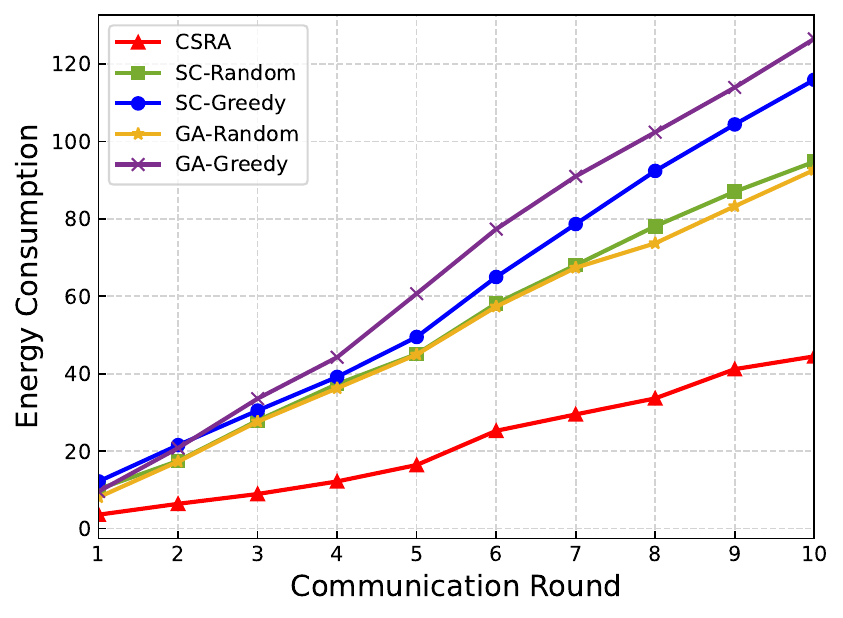}}
	\\
	\subfloat[Learning Latency \label{latency round}]{
		\includegraphics[scale=0.45]{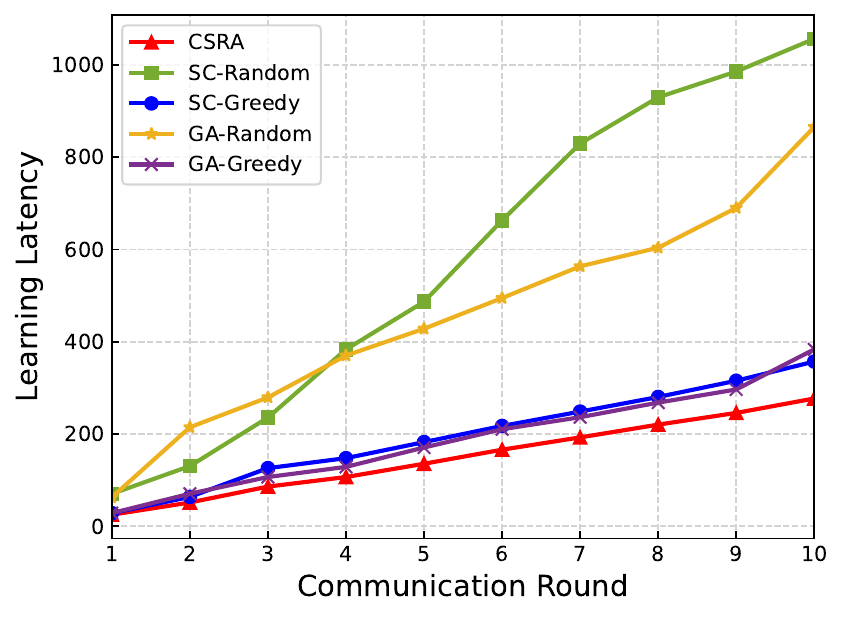}}
	
	\caption{Comparison of (a) Energy consumption and (b) Learning latency between CSRA and other methods.}
	\label{fig:main}
\end{figure}

\subsubsection{Effect of Bandwidth}
\begin{figure}[tbp]
	\centering
	\vspace{7pt}
	\subfloat[Energy Consumption\label{bandwidth energy}]{
		\includegraphics[scale=0.45]{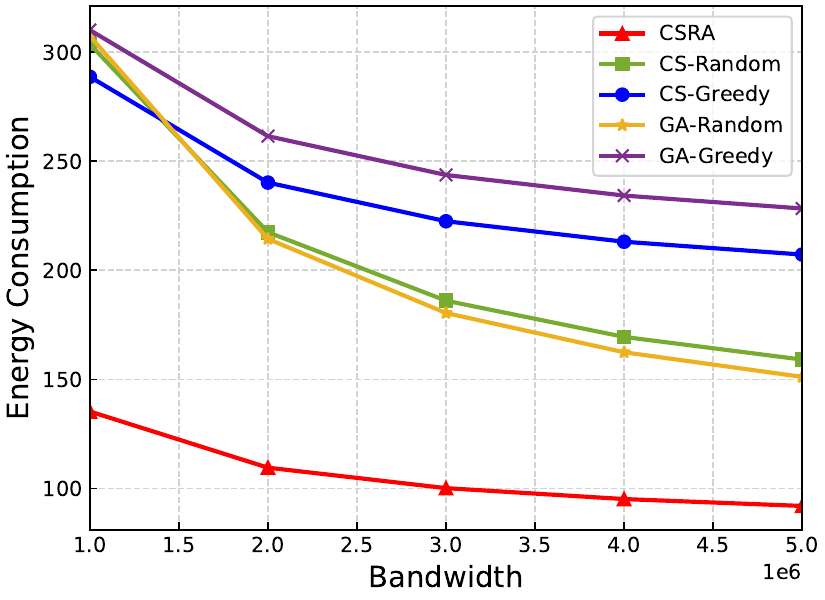}}
	\\
	\subfloat[Learning Latency\label{bandwidth latency}]{
		\includegraphics[scale=0.45]{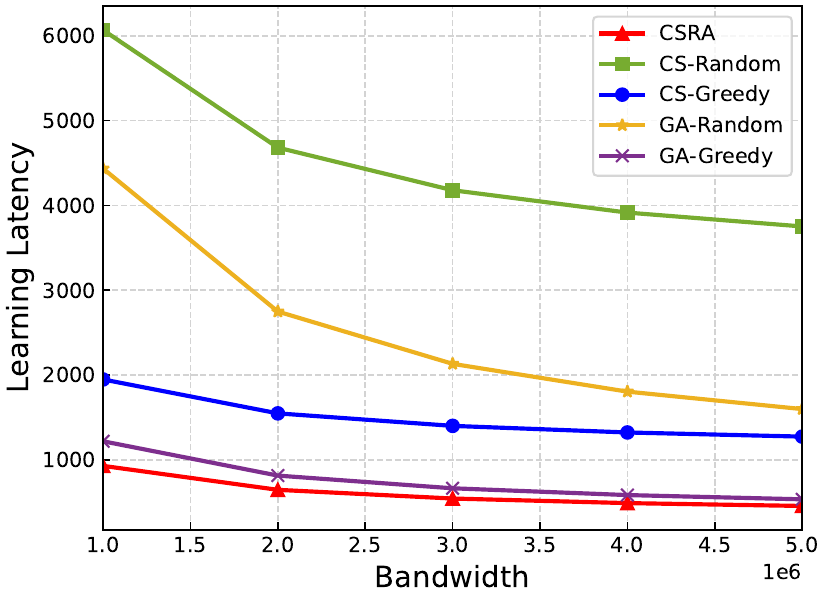}}
	
	\caption{Effect of bandwidth on (a) Energy consumption and (b) Learning latency.}
	\label{fig:bandwidth}
\end{figure}
Figs. \ref{bandwidth energy} and \ref{bandwidth latency} illustrate the energy consumption and learning latency under different total bandwidth values. It can be observed that both learning latency and energy consumption decrease as the total bandwidth increases. This is because allocating more bandwidth to clients reduces transmission latency.
Additionally, as bandwidth continues to increase, its impact on learning latency and energy consumption diminishes. This is because, once transmission latency decreases to a certain threshold, transmission latency and energy are no longer the dominant influencing factors of the total training costs. Notably, CSRA achieves greater reductions in both energy consumption and latency, particularly when the total bandwidth is limited. This demonstrates the effectiveness of CSRA in optimizing resource allocation under constrained bandwidth conditions.
\subsubsection{Effect of Channel Gain}
\begin{figure}[tbp]
	\centering
	\subfloat[Energy Consumption\label{channel energy}]{
		\includegraphics[scale=0.45]{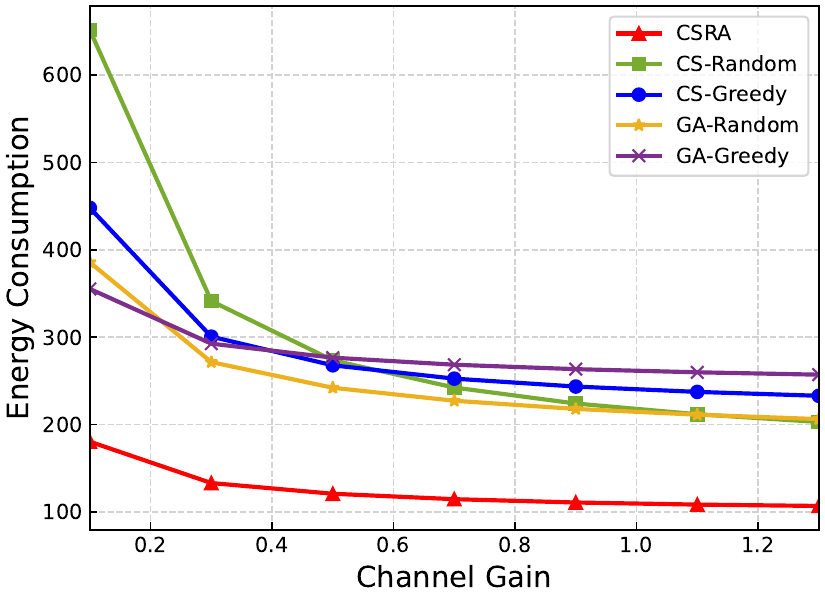}}
	\\
	\subfloat[Learning Latency\label{channel latency}]{
		\includegraphics[scale=0.45]{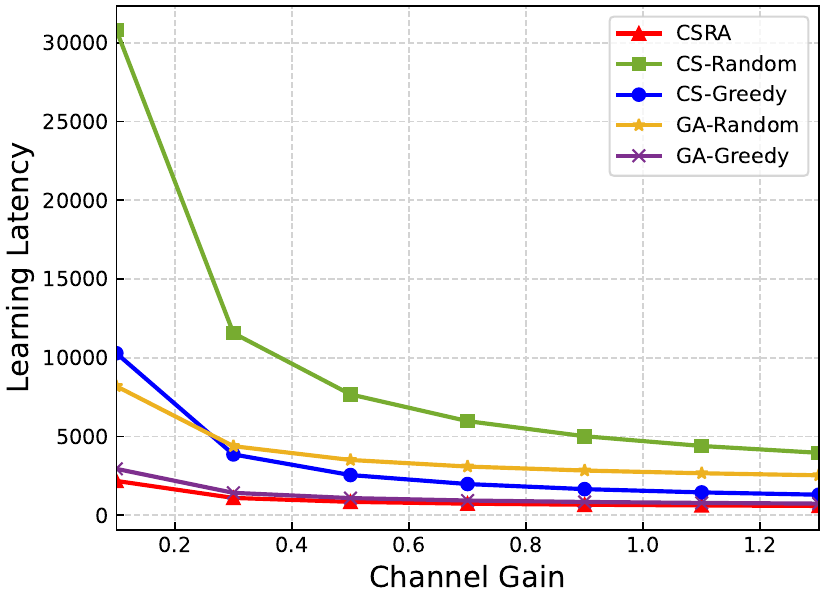}}
	
	\caption{Effect of channel gain on (a) Energy consumption and (b) Learning Latency.}
	\label{fig:channel}
\end{figure}
To demonstrate the effectiveness of CSRA under different channel conditions, we vary the channel gain from 0.1 to 1.3. As shown in Figs. \ref{channel energy} and \ref{channel latency}, CSRA consistently achieves lower energy and latency. This demonstrates that CSRA is more robust to channel fluctuations and can maintain stable performance even when channel quality degrades. In contrast, the baseline methods are more sensitive to channel variations. These results indicate that jointly optimizing client selection and resource allocation enables CSRA to mitigate the adverse impact of channel degradation effectively.

\subsubsection{Effect of Data Size Budget}
Figs. \ref{size energy} and \ref{size latency} present the impact of the data size budget on energy consumption and learning latency. For all methods, both latency and energy increase steadily with larger data size budget, while CSRA exhibits slight fluctuations compared with the baseline methods. This is mainly because increasing the data size budget leads to higher computation workloads, resulting in increased computation costs.
Overall, CSRA outperforms all baseline methods in both learning latency and energy consumption. However, as shown in Fig. \ref{acc_size}, the improvement in test accuracy becomes increasingly marginal with a larger data size budget. Hence, using excessively large training data samples per round is not recommended, as it offers limited accuracy gains but significantly increases both latency and energy consumption.

\begin{figure}[tbp]
	\centering
	\subfloat[Energy Consumption \label{size energy}]{
		\includegraphics[scale=0.45]{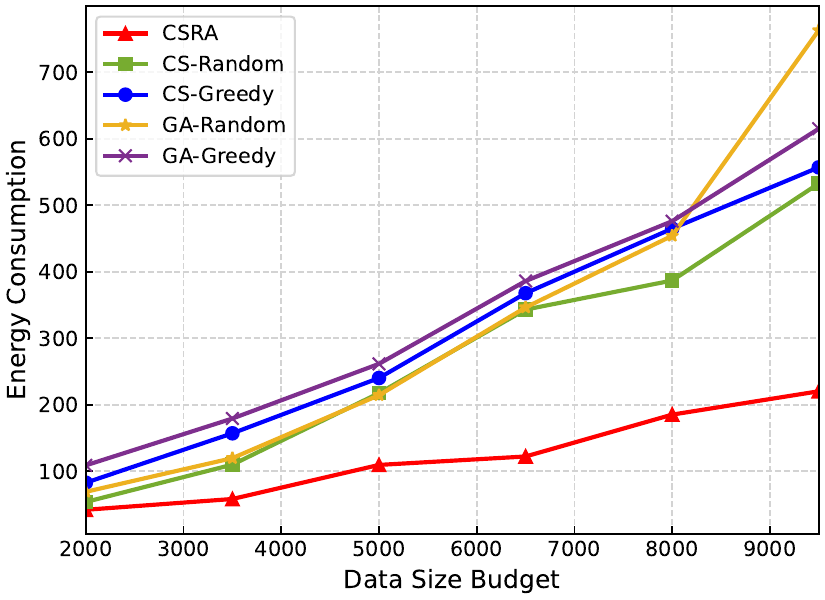}}
	\\
	\subfloat[Learning Latency \label{size latency}]{
		\includegraphics[scale=0.45]{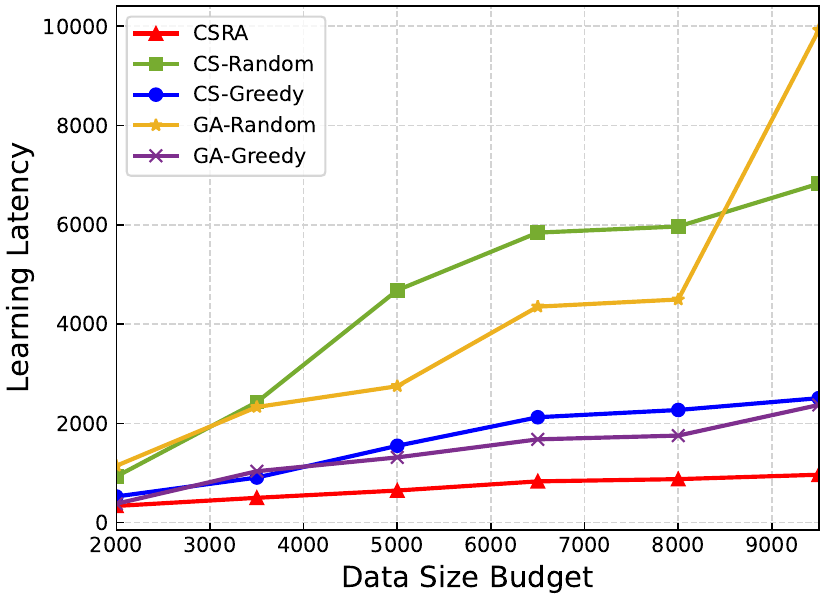}}
	
	\caption{Effect of data size budget on  (a) Energy consumption and (b) Learning Latency.}
	\label{size}
	
\end{figure}
\subsubsection{Effect of Trade-off Parameters}	
\begin{figure}[tbp]
\vspace{4pt}
    \hspace*{0.05\textwidth}
	\centering
	\includegraphics[scale=0.45]{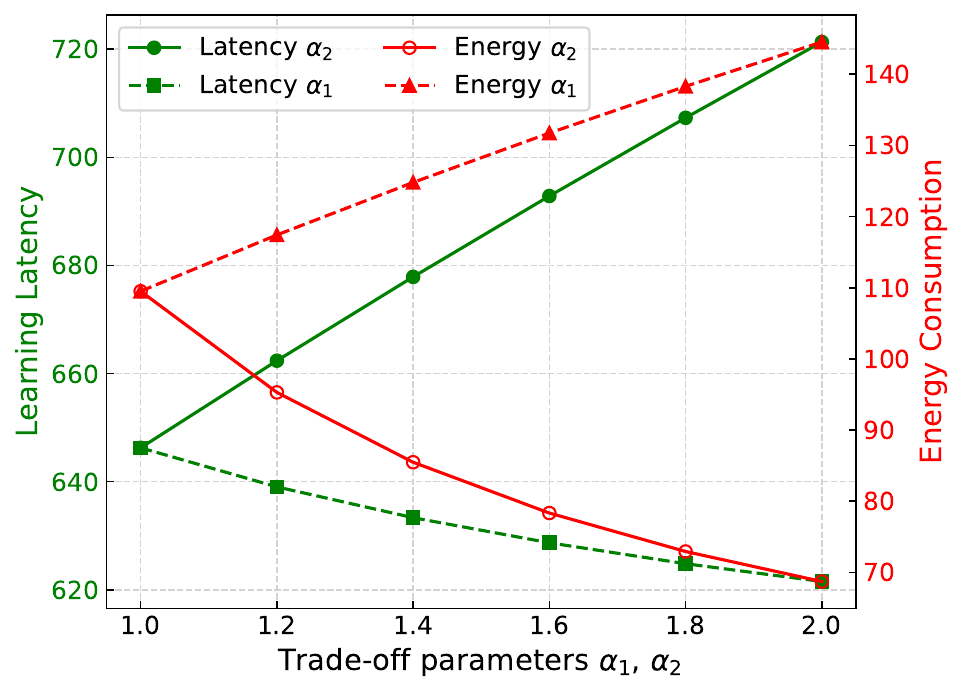}
	\caption{Effect of trade-off parameters $\alpha_1$ and $\alpha_2$.} \label{param}
\end{figure}

The impact of the trade-off parameters $\alpha_1$ and $\alpha_2$ on energy consumption and learning latency is illustrated in Fig. \ref{param}, where one parameter varies from $1.0$ to $2.0$ while keeping the other fixed at $1.0$. As shown, increasing $\alpha_1$ leads to higher energy consumption but lower latency. This is because $\alpha_1$ prioritizes latency in the objective function, and a higher value of $\alpha_1$ makes latency the dominant factor, prompting CSRA to optimize it accordingly. In contrast, the parameter $\alpha_2$ primarily controls energy consumption. Increasing $\alpha_2$ exhibits the opposite trend, leading to reduced energy consumption but increased learning latency. These results demonstrate that CSRA effectively balances and controls both latency and energy consumption, achieving the objective of joint optimization of latency and energy.

\section{conclusions} \label{conclusions}
In this paper, we investigated the effect of data heterogeneity and training cost optimization in FL. We proposed a joint client selection and resource allocation approach (CSRA) to optimize energy consumption and learning latency while maintaining high model accuracy. 
The following conclusions were drawn from the simulations: 1) Selecting clients with low distribution divergence is an effective strategy to improve model performance. 
2) Excessively increasing the data size budget per round can only offer limited accuracy improvement, while significantly increasing latency and energy consumption. 
3) CSRA effectively improves test accuracy while simultaneously reducing and flexibly controlling learning latency and energy consumption through the trade-off parameters.

\bibliographystyle{IEEEtran}
\bibliography{References1}

\vfill

\end{document}